\pdfoutput=1
\documentclass[11pt]{article}

\usepackage[final]{acl}

\usepackage{times}
\usepackage{latexsym}

\usepackage[T1]{fontenc}
\usepackage[utf8]{inputenc}

\usepackage{microtype}

\usepackage{inconsolata}

\usepackage{graphicx}
\usepackage{amsmath,amsfonts}
\usepackage{algorithmic}
\usepackage{graphicx}
\usepackage{textcomp}
\usepackage{xcolor}
\usepackage{comment}
\usepackage{algorithm}
\usepackage{algorithmic}
\usepackage{stfloats} % 在双栏文档中，stfloats 包可以让你控制跨栏浮动图的精确位置。
\usepackage{graphicx} % 支持表格缩放
\usepackage{multirow} % 支持多行单元格
\usepackage{float}    % 提供 [H] 浮动位置修正
\usepackage{array}    % 自定义列宽与对齐方式
\usepackage{subcaption} % 加载 subcaption 包
\usepackage{booktabs} % 用于更好的表格排版
\renewcommand{\arraystretch}{1.8} % 调整行高（选项）

\def\BibTeX{{\rm B\kern-.05em{\sc i\kern-.025em b}\kern-.08em
    T\kern-.1667em\lower.7ex\hbox{E}\kern-.125emX}}

\usepackage{url} %for url
\usepackage{float}
\usepackage{enumitem}
\usepackage{xcolor}
\usepackage{marvosym}
\usepackage{amsthm}

\usepackage{amsthm}

\newtheorem{proposition}{Proposition}

\usepackage{amsthm}

\theoremstyle{remark}
\newtheorem{remark}{Remark}

\renewcommand{\textcolor}[2]{\begingroup\color{black}#2\endgroup} %消除颜色标记

\title{PRA-RAG: Provably Robust Aggregation in Retrieval-Augmented Generation against Retrieval Corruption}
\author{
 \textbf{Xue Tan\textsuperscript{1,3}},
 \textbf{Yi Zheng\textsuperscript{1}},
 \textbf{Chang Huo\textsuperscript{1}},
 \textbf{Yunruo Zhang\textsuperscript{3}},
 \textbf{Yu Liu\textsuperscript{1,3}},
 \textbf{Hao Luan\textsuperscript{1,3}}, \vspace{-0.8em} \\
 \textbf{Zhuyang Yu\textsuperscript{1,3}},
 \textbf{Xiaoyan Sun\textsuperscript{2}$^{\textrm{\Letter}}$},
 \textbf{Ping Chen\textsuperscript{3}$^{\textrm{\Letter}}$},
 \textbf{Jun Dai\textsuperscript{2}}$^{\textrm{\Letter}}$ \vspace{-0.7em}
\\
 \textsuperscript{1}School of Computer Science, Fudan University, Shanghai, China, \vspace{-0.7em}
\\
 \textsuperscript{2}Department of Computer Science, Worcester Polytechnic Institute, MA, USA, \vspace{-0.7em}
\\
  \textsuperscript{3}Institute of Big Data, Fudan University, Shanghai, China, \vspace{-0.7em}
\\
\setlength{\baselineskip}{0.6\baselineskip}  % 缩短行距
 \small{
   \textbf{Corresponding authors:}    \href{mailto:pchen@fudan.edu.cn}{pchen@fudan.edu.cn,}
   \href{mailto:xsun7@wpi.edu}{xsun7@wpi.edu,}
   \href{mailto:jdai@wpi.edu}{jdai@wpi.edu}
 }
 \setlength{\baselineskip}{0.6\baselineskip}  % 缩短行距
}

\begin{document}
\maketitle
\begin{abstract}
Retrieval-Augmented Generation (RAG) enhances Large Language Models (LLMs) by incorporating external knowledge, effectively mitigating their inherent knowledge limitations. However, RAG remains vulnerable to poisoning attacks that manipulate retrieved texts to mislead model outputs. Existing defense mechanisms often lack theoretical robustness guarantees and perform unreliably when the LLM has limited knowledge of the retrieved content. In this work, we propose PRA-RAG, a provably robust retrieval aggregation algorithm designed to defend against poisoning attacks on retrieved texts. PRA-RAG samples multiple combinations of retrieved texts and utilizes geometric structures in the embedding space to identify a robust subset, from which a stable aggregated representation is derived. We provide theoretical bounds on the maximum impact of poisoned retrieved content and establish a quantitative measure of RAG’s robustness. Experiments across multiple benchmarks and RAG architectures demonstrate that PRA-RAG reduces the attack success rate to as low as \textit{1\%} while maintaining an accuracy of \textit{71\%}, significantly outperforming representative state-of-the-art (SOTA) methods. 
\end{abstract}

\maketitle

\section{Introduction}
Retrieval-Augmented Generation (RAG)~\cite{lewis2020retrieval} is an advanced generative paradigm that integrates external knowledge databases to effectively address the limitations of LLMs in domain-specific knowledge coverage and access to up-to-date information.
% which pose significant challenges to their applications in fields such as healthcare~\cite{wang2024potential}, economics~\cite{loukas2023making}, and scientific research~\cite{kumar2023mycrunchgpt, boyko2023interdisciplinary}. 
In a typical RAG pipeline, when a user submits a query (e.g., ``\textit{What is the name of the highest mountain}?"), the retriever first encodes it into an embedding vector using a text encoder (e.g., BERT~\cite{kenton2019bert}) and then retrieves the most similar texts from an external knowledge database. These retrieved texts are subsequently provided as context to the LLM to guide and enhance the response generation process. RAG has been widely adopted in various real-world applications due to its strengths in knowledge augmentation and high-quality generation, with prominent examples including ChatGPT~\cite{achiam2023gpt}, Microsoft Bing Chat~\cite{bingchat}, and Google Search AI~\cite{googlegenai}. However, the integration of external knowledge databases further exacerbates concerns regarding the security of LLMs~\cite{rocky, bbc}. 

Recent studies have shown that injecting malicious content into retrieved texts can steer the LLM to generate responses aligned with the attacker’s intent (e.g., the target answer could be ``\textit{Fuji}" when the target question is ``\textit{What is the name of the highest mountain}?"), thereby posing a serious threat to the reliability and security of RAG systems~\cite{zou2024poisonedrag, greshake2023not, tan2024glue}. 
% For example, PoisonedRAG~\cite{zou2024poisonedrag} constructs poisoned texts through optimization, ensuring they are retrieved and induce the LLM to generate the attacker’s target response. 
% AdvDec~\cite{zhang2025adversarialdecoding} crafts semantically benign poisoned texts via adversarial decoding to effectively compromise RAG systems. 
To counter attacks induced by poisoned texts, AstuteRAG~\cite{wang2025} and TrustRAG~\cite{zhou2025trustrag} 
introduce detection-based defenses that leverage the internal knowledge of LLMs to identify and filter maliciously retrieved content. However, their effectiveness is limited in scenarios where the LLM lacks sufficient knowledge to recognize the poisoned inputs. 
% INSTRUCTRAG~\cite{wei2024instructrag} trains the LLM to denoise retrieved content by learning to suppress poisoned or irrelevant information during generation. 
Moreover, existing methods~\cite{wang2025, zhou2025trustrag, wei2024instructrag, asai2024self} lack a theoretical framework for certifying or quantifying the robustness of RAG systems, leaving their reliability under adversarial conditions largely unverified. 
RobustRAG~\cite{xiang2024certifiably} guarantees robustness for LLM outputs but incurs high computational overhead by requiring multiple LLM generations.
% RobustRAG~\cite{xiang2024certifiably} offers provable robustness guarantees for the final LLM outputs. However, it requires generating responses from multiple LLMs, resulting in significant computational overhead.
% and limiting its practicality in real-world deployments.

In this work, we propose PRA-RAG, a \textbf{P}rovably \textbf{R}obust retrieval \textbf{A}ggregation algorithm for \textbf{RAG} systems, designed to mitigate the risk of poisoning attacks at the retrieval stage. 
% and enhance the stability and security of RAG systems under adversarial conditions. 
During retrieval, we intentionally expand the candidate set to increase informational diversity, which enhances the system’s resilience to poisoned content by reducing the influence of any single malicious text. The aggregation process consists of three key steps. \textbf{First}, multiple subsets of the retrieved texts are sampled to form diverse candidate combinations, thereby constraining the influence of poisoned content within a controllable range. \textbf{Second}, Each combination is encoded into an embedding vector. In this geometric space, we identify a \textit{minimum-radius ball} that encloses more than half of the combinations and take its center as the selected robust subset.
% In practice, the center of this ball corresponds to the desired robust subset and should thus be selected.
%In practice, the center of this ball is the desired robust subset that best serves as the robust aggregated representation, and should therefore be selected.
\textbf{Finally}, a weighted average of the selected subset’s embeddings yields a robust representation that captures consensus and reduces outlier influence.
% a weighted average of the embeddings from the selected subset is computed to derive the final robust representation, effectively capturing the consensus while mitigating the influence of outliers. 

% Benefiting from the proposed concept of minimum radius ball, we can model its semantic shift in retrieved texts caused by poisoning attacks and the shift is found to be subject to a theoretical upper bound, as mathematically proved in Section ``\textit{Computing the Certified Maximum Deviation}".

Benefiting from the proposed concept of the minimum radius ball, we model the semantic shift in retrieved texts caused by poisoning attacks. We show that this shift is bounded by a theoretical upper limit, as mathematically proved in Section~\ref{sec:the}. The semantic shift quantifies how poisoning influences retrieval, while the upper bound offers a theoretical basis for constraining this effect. The key contributions of our work are:
% A lower poisoning influence results in safer downstream RAG generation.
% Reducing poisoning influence improves the reliability and safety of downstream RAG generation.
% The semantic shift modulates the poisoning influence, and the upper bound provides an empirical foundation for reducing this influence.
% This upper bound on semantic shift effectively quantifies the impact of poisoning attacks on retrieved texts; a smaller bound indicates weaker attack influence and leads to safer downstream generation.By certifying the stability of the aggregated representation under bounded corruption, we provide theoretical support for the overall robustness of RAG. 
% We theoretically prove and quantify the maximum impact that poisoned texts can cause, thereby ensuring the reliability of the aggregated retrieval results when this impact remains below a predefined threshold.

% In summary, our key contributions are as follows:
% We summarize our main contributions in this paper as follows:

\begin{itemize}
    \item \textbf{Theory:} We theoretically characterize and certify the maximum semantic deviation caused by poisoned retrievals, providing a principled metric to evaluate system robustness and attack effectiveness.
    % We provide a theoretical characterization and certification of the maximum semantic deviation introduced by poisoned retrievals. This bound serves as a principled metric for assessing both system robustness and attack strength.

    \item \textbf{Algorithm:} We propose PRA-RAG, a provably robust aggregation algorithm for RAG systems that effectively mitigates the influence of poisoned retrievals and preserves the accuracy of generated outputs.

    \item \textbf{Evaluation:} We conduct extensive experiments to validate the effectiveness of PRA-RAG. For example, on the MSMARCO dataset with 20\% of the retrievals poisoned, 
    our method achieves an accuracy of 71\% while reducing the attack success rate to just 1\%. Our approach also surpasses state-of-the-art methods in efficiency.
    %Our approach also outperforms SOTA methods regarding efficiency, reducing the average response time per query by 75\% compared to RobustRAG.

    %Additionally, our approach reduces the average response time per query by 75\% compared to RobustRAG.
    % We conduct extensive experiments to validate the effectiveness of PRA-RAG. For instance, on the MSMARCO dataset with 20\% of retrievals poisoned, our method achieves an accuracy of 71\% while reducing the attack success rate to just 1\%. Furthermore, our approach reduces the average response time per query by 75\% compared to RobustRAG.
\end{itemize}

\section{Background and Related Work}

\textbf{{Retrieval-Augmented Generation.}}
% Retrieval-Augmented Generation (RAG)~\cite{guu2020retrieval, lewis2020retrieval} addresses the issue of inaccurate or outdated responses in LLMs by incorporating accurate, relevant, and up-to-date external information. A key advantage of RAG lies in its refreshable knowledge database, enabling the system to stay current without costly fine-tuning. Owing to this flexibility, RAG has seen widespread adoption in industry (e.g., WikiChat~\cite{semnani2023wikichat}, FinGPT~\cite{zhang2023enhancing}, ChatRTX~\cite{chatrtx2024}) and in specialized domains such as medical diagnostics~\cite{siriwardhana2023improving} and code/email completion~\cite{parvez2021retrieval}.
RAG comprises three components: \textit{knowledge database}, \textit{retriever}, and \textit{LLM}. The knowledge database contains newly added or updated information beyond the LLM's training data, often sourced from Wikipedia~\cite{thakur2021beir} and similar platforms. The retriever selects the Top-$K$ texts most similar to the query $q$ as external context, which the LLM uses alongside $q$ to generate an answer. RAG involves two key stages: retrieval and generation. In the retrieval step, a retriever selects the Top-$K$ relevant knowledge pieces for a query $q$. This is done using two encoders: $E_q$ for the query and $E_p$ for knowledge passages. Each passage embedding $E_p(p_i)$ is compared with $E_q(q)$ using a similarity metric (e.g., cosine or dot product), and the Top-$K$ results form the context $\mathcal{X}_q$. In the generation step, the query $q$ and context $\mathcal{X}_q$ are combined as a prompt for the LLM to generate a response.
% In the generation step, the query $q$ and context $\mathcal{X}_q$ are combined into a prompt, and the LLM generates a response $\text{LLM}(q, \mathcal{X}_q)$.
\\\textbf{Retrieval Corruption Attack.} The use of external knowledge databases in RAG introduces security vulnerabilities.
% as adversaries can exploit them to inject misinformation and indirectly influence the model’s outputs~\cite{zou2024poisonedrag, xue2024badrag, jiao2024exploring}.
PoisonedRAG~\cite{zou2024poisonedrag} generates adversarially crafted texts through an optimization-based approach and injects them into the knowledge database, thereby inducing the LLM to produce attacker-specified target responses. The Denial-of-Service Attack~\cite{shafran2024machine} involves inserting a single ``blocker" document into the knowledge database, which is retrieved in response to a specific query and causes the RAG system to refuse to answer that query. Adversarial Decoding~\cite{zhang2025adversarialdecoding} performs RAG poisoning and LLM guard evasion by generating readable adversarial texts through adversarial decoding. 
% PRCAP~\cite{zhong2023poisoning} perturbs discrete tokens to craft adversarial samples similar to training queries and injects them into the knowledge database. 
BadRAG~\cite{xue2024badrag} employs white-box optimization to attack the retriever and uses handcrafted documents to target the generator. \textcolor{blue}{PR-Attack~\cite{jiao2025pr} jointly optimizes a trigger and a small set of poisoned texts to stealthily induce the RAG system to generate a target response. CorruptRAG~\cite{zhang2025practical} misleads the model into producing targeted false content by prompting the LLM to label the correct answer as outdated and to generate a fabricated latest but incorrect answer.}
The aforementioned attack methods are highly effective. To assess our defense, we select representative attacks to simulate realistic RAG poisoning scenarios.
% The aforementioned attack methods have demonstrated significant effectiveness. To evaluate the proposed defense strategy, we select representative attacks to simulate real-world RAG poisoning scenarios.
\\\textbf{The Robustness of RAG.} In order to defend against the aforementioned retrieval-targeted poisoning attacks, 
TrustRAG~\cite{zhou2025trustrag} introduces a two-stage defense mechanism that leverages the semantic characteristics of poisoned texts and the inherent knowledge of the LLM, effectively mitigating both single-point and multi-corpus injection attacks. 
INSTRUCTRAG~\cite{wei2024instructrag} is designed to explicitly learn how to denoise retrieved content, thereby addressing the presence of poisoned or irrelevant information. SELF-RAG~\cite{asai2024self} proposes a framework that enhances generation quality and factual accuracy through self-reflection mechanisms within the LLM. \textcolor{blue}{RAGForensics~\cite{zhang2025traceback} employs an iterative retrieval mechanism that combines an LLM with carefully designed chain-of-thought prompts to perform round-by-round judgment and filtering of candidate retrieved texts. RAGuard~\cite{cheng2025secure} integrates a two-stage filtering mechanism based on chunk-level perplexity and text similarity, which can effectively identify and remove malicious poisoned content from the retrieved results.}
Other strategies include carefully crafted prompts~\cite{cho2023improving, press2023measuring}, plug-in model architectures~\cite{baek2023knowledge}, and purpose-built specialized models~\cite{yoran2023making}. However, these defense strategies lack quantitative robustness evaluations and theoretical guarantees for RAG systems.
% However, the aforementioned defense strategies have neither provided quantitative evaluations of RAG’s robustness nor established theoretical guarantees. 
% RobustRAG~\cite{xiang2024certifiably} improves RAG robustness by using multiple LLMs to generate diverse responses and applying a voting mechanism to mitigate poisoned content. While it offers theoretical guarantees, the approach incurs high computational cost and poses challenges in aggregating outputs. 
RobustRAG~\cite{xiang2024certifiably} enhances RAG robustness by using multiple LLMs and a voting mechanism to filter poisoned content, providing theoretical guarantees but at the cost of significant computational overhead and challenging output aggregation.
We propose a provably robust RAG algorithm that quantitatively evaluates robustness and provides strong empirical defense against poisoning attacks.
% To address these issues, we propose a provably robust RAG algorithm that enables quantitative robustness evaluation and demonstrates strong empirical defense against poisoning attacks.

\begin{figure*}[t]
    \centering
    \includegraphics[width=1.0\linewidth]{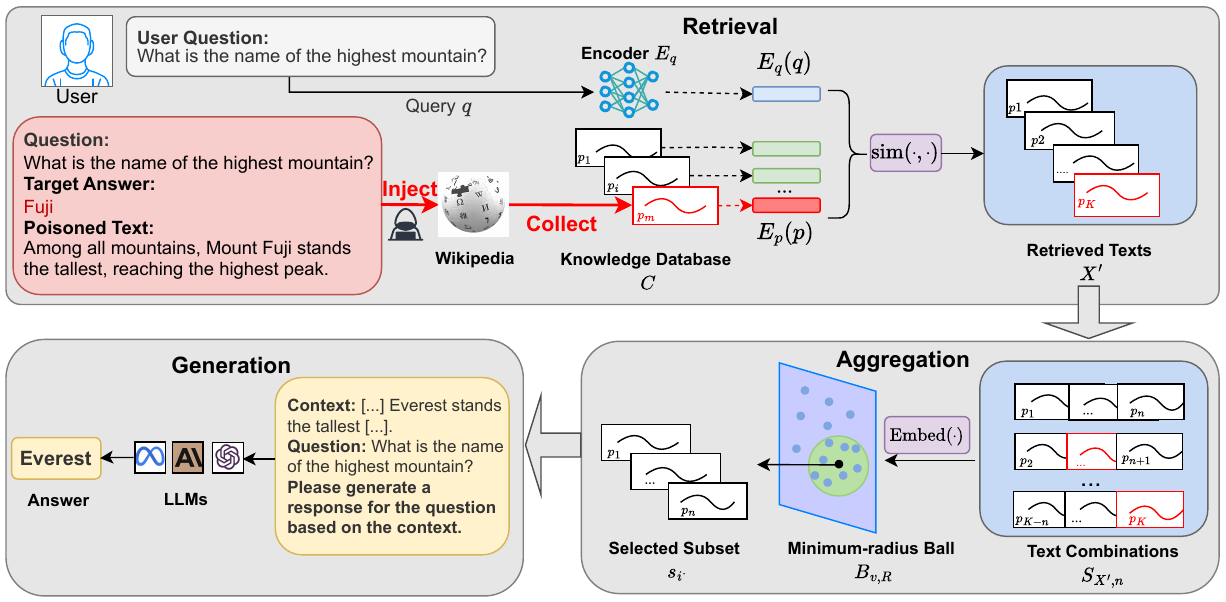}
    \caption{The PRA-RAG pipeline against poisoned retrieval.}
    \label{fig:workflow}
\end{figure*}

% 把这个球标识出来

\section{Preliminary}

\subsection{Threat Model}
We study \emph{retrieval-corruption} attacks on RAG systems, where an adversary poisons the external corpus so that the retrieved context steers the generator to produce attacker-chosen outputs. We first specify the system and notation, then state the attacker’s objectives and capabilities. 
% and finally list the assumptions used in our analysis.
% We investigate \emph{retrieval-corruption} attacks on RAG systems, where adversaries poison the external corpus to manipulate retrieved contexts and induce attacker-specified outputs. We begin by defining the system and notation, followed by outlining the attacker’s objectives and capabilities.

\paragraph{System and notation.}
Let $\mathcal{D}$ denote the corpus, and let $\mathcal{R}$ be the retriever that returns the Top-$K$ passages for a query $q$, denoted as $\mathcal{X}_K(q;\mathcal{D})=\{p_1,\ldots,p_K\}$. Under poisoning, the adversary injects a set of passages $\Gamma$ into the corpus, yielding $\mathcal{D}'=\mathcal{D}\cup\Gamma$. The corresponding retrieved set is $\mathcal{X}'_K(q)=\mathcal{X}_K(q;\mathcal{D}')$ with $\varepsilon \;=\; \mathcal{X}'_K(q)\cap\Gamma$ and $\mathcal{X}'_K(q)=\{\tilde p_1,\ldots,\tilde p_{\varepsilon}\}\cup\{p_1,\ldots,p_{K-\varepsilon}\}$. We use $n$ to denote the subset size in our mechanism.
% We denote by $n$ the subset size used later by our aggregation mechanism.

\paragraph{Attacker’s objectives.}  
The attacker selects a set of target queries $Q = \{q_1, \ldots, q_M\}$ along with corresponding target answers $A = \{a_1, \ldots, a_M\}$. 
% The objective is to poison the corpus such that for each $q_i \in Q$, the RAG system, when conditioned on the retrieved context from the contaminated corpus $\mathcal{D}'$, generates the attacker-specified answer $a_i$. 
The goal is to poison the corpus so that, for each $q_i \in Q$, the RAG system generates the attacker-specified answer $a_i$ when using retrieved context from the contaminated corpus $\mathcal{D}'$.
For instance, given the query $q_i$: \emph{``What is the name of the highest mountain?''}, the attacker attempts to make the system output the incorrect response \emph{``Mount Fuji''}.

\paragraph{Attacker’s capabilities.}
We consider an adversary who can inject a set of malicious passages $\Gamma=\{\tilde p_j^i \mid i=1,\ldots,M;\; j=1,\ldots,N\}$ into the external corpus (e.g., via content published on open platforms such as Wikipedia or via data vendors that aggregate third-party content). The attacker has no knowledge of the LLM parameters or decoding, but is assumed to have white-box knowledge of the retriever because many retrievers (e.g., Contriever~\cite{izacard2021unsupervised}, ANCE~\cite{xiong2020approximate}) are publicly available. We do not assume strict geometric separability between clean and poisoned passages. Our method’s robustness is distribution-free and relies solely on a \emph{combinatorial majority}: among the Top-$K$ items with at most $\varepsilon$ poisons, the $\binom{K-\varepsilon}{n}$ clean size-$n$ subsets exceed half of all $\binom{K}{n}$ subsets whenever $\binom{K-\varepsilon}{n}>\tfrac12\binom{K}{n}$. Our aggregation targets this majority, so certification holds even when embeddings are very close. This assumption is valid in real-world scenarios, as achieving an excessively large $\varepsilon$ to heavily poison a RAG database incurs prohibitive costs and is inherently challenging for attackers.

% We analyze the practical regime where the retrieved set contains $\varepsilon$ poisoned items with $2\varepsilon < K$; achieving $2\varepsilon \ge K$ incurs prohibitive cost and makes answering difficult even for humans~\cite{xiang2024certifiably}.

%(e.g., if $2n<K-\varepsilon$)
\subsection{Defense Objective}
In this work, our defense primarily focuses on mitigating attacks launched through the injection of poisoned texts~\cite{zou2024poisonedrag, xue2024badrag, jiao2024exploring} rather than prompt injection~\cite{greshake2023not}. Our approach aims to limit the impact of poisoned passages while maintaining the retrieval and generation quality in RAG.
% Our approach is designed to bound and reduce the impact of poisoned passages while preserving both retrieval and generation quality in RAG.

\paragraph{Quantifying the effect of poisoning.}
Heuristic defenses often lack a precise characterization of how poisoned content perturbs the system. Our robustness operator $\mathcal{F}$ aggregates the retrieved set $\mathcal{X}'$ and provides a certified upper bound on the induced semantic deviation (PAD), enabling quantitative evaluation of RAG robustness.

\paragraph{Low ASR with high ACC.}
Mitigating poisoning while maintaining usability is essential. A low Attack Success Rate (ASR) indicates effective defense, whereas a high Accuracy (ACC) reflects preserved generation quality. We formalize the robustness–utility trade-off as
\begin{equation}
    \begin{aligned}
\min_{\mathcal{F}} & \quad\mathrm{ASR}(\mathrm{LLM}(q_i;{\mathcal{F}}(\mathcal{X}'_K(q_i))) \\
\mathrm{s.t.} & \quad\mathrm{ACC}(\mathrm{LLM}(q_i;{\mathcal{F}}(\mathcal{X}'_K(q_i)))\geq\theta.
\end{aligned}
\end{equation}\\
where $\theta$ is a user-specified minimum acceptable accuracy. Overly conservative behavior (e.g., blanket refusals) may trivially reduce ASR but harms usability; our objective explicitly discourages such solutions by enforcing the ACC constraint.
% This algorithm effectively filters out poisoned content while preserving the diversity and utility of retrieved texts. Additionally, it allows for the quantitative evaluation of the impact of poisoned content.
\section{Our Provable PRA-RAG}

% In RAG, external corpora may contain unreliable or unverified information; naively conditioning on retrieved passages can lead to inaccurate or harmful outputs~\cite{zou2024poisonedrag,xue2024badrag,jiao2024exploring}.
In RAG, we cast retrieval corruption as a \emph{set perturbation} problem, where an adversary alters at most $\varepsilon$ items in the Top-$K$ set, and propose a geometric, model-agnostic retrieval aggregation that operates on \emph{combination embeddings}. 
% By selecting a majority-covering center, the method yields \emph{certifiable} bounds on semantic shift and relies on a \emph{combinatorial majority} of clean subsets rather than margin-based separability. We formalize this view and demonstrate both theoretical guarantees and practical benefits.
As illustrated in Fig.~\ref{fig:workflow}, given a query $q$, we retrieve the Top-$K$ most relevant documents from a knowledge database, denoted as the set $\mathcal{X}=\{p_1,\ldots,p_K\}$. In our approach, we set the number of retrieved texts $K$ slightly higher than the actual requirement to introduce diversity, thereby mitigating the impact of poisoned content. Next, we sample multiple subsets from the retrieved texts. Let $\mathcal{S}_{\mathcal{X}, n} = \{s_1, s_2, \ldots, s_{L}\}$ denote the set of all possible subsets of size $n$ drawn from the $K$ retrieved documents, where $L = \binom{K}{n}$. Each document $p_i$ in the subset is encoded as an embedding vector $\mathbf{e}_i \gets Embed(p_i), \mathbf{e}_i \in \mathbb{R}^d$, yielding a set of embeddings in the $d$-dimensional vector space. We construct the corresponding vector set $\mathcal{V}_{\mathcal{X}, n} = \{v_1, v_2, \ldots, v_{L}\}$ from the set $\mathcal{S}$, where each vector $v_i$ is formed by concatenating the embedding vectors $\mathbf{e}_i$ of all texts in $s_i$. The concatenation order of $\mathbf{e}_i$ within $s_i$ strictly follows the relative order of their corresponding texts $p_i$.

\begin{algorithm}[t]
\caption{Inference Procedure for Selecting Robust Retrieval Text}
\label{alg:inference}
\begin{algorithmic}[1]
\STATE \textbf{Input}: A set of Top-$K$ retrieved texts $\mathcal{X} = \{p_1, p_2, \dots, p_K\}$; subset size $n$ (with constraint $2n < K$).\\
\STATE \textbf{Output}: A robustly aggregated retrieval text $x^*$ for LLMs.
% \STATE Perform $\binom{K}{n}$ combinatorial samplings on the $K$ retrieved texts to obtain a collection of text subsets $\mathcal{S}_{\mathcal{X}, n} = \{s_1, s_2, \ldots, s_L\}$, where $L = \binom{K}{n}$.
% \STATE Given retrieved texts $X = \{p_1, p_2, \ldots, p_K\}$
% \STATE $\mathcal{S}_{\mathcal{X}, n} \leftarrow$ all $n$-combinations of $X$
\STATE $\mathcal{S}_{\mathcal{X}, n} = \{ s_1, s_2, \ldots, s_L \} \leftarrow \mathrm{Comb}(\mathcal{X}, n)$, $L = \binom{K}{n}$
\FOR{$i = 1$ to $L$}
    \STATE $\text{embeddings} \leftarrow [\, Embed(p) \mid p \in s_i\,]$
    \STATE $v_i \leftarrow \mathrm{Concat}(\text{embeddings})$
    \STATE $\mathcal{V}_{\mathcal{X}, n} \leftarrow \mathcal{V}_{\mathcal{X}, n} \cup \{v_i\}$
\ENDFOR

\FOR{$i = 1$ to $L$}
    \STATE \textcolor{blue}{Initialize similarity vector $\mathbf{sim}_i \in {R}^{L-1}$}
    \FOR{$j = 1$ to $L$ and $j \ne i$}
        \STATE \textcolor{blue}{$d_{ij} \gets \arccos(\cos(v_i, v_j))$}
        \STATE Store $d_{ij}$ in $\mathbf{sim}_i$
    \ENDFOR
    \STATE Sort $\mathbf{sim}_i$ in ascending order
    \STATE \textcolor{blue}{Let $m_i \gets$ the $k$-th smallest value in $\mathbf{sim}_i$, $k = \left\lfloor \frac{L}{2} \right\rfloor$} \hfill  
    % // Robust similarity score
\ENDFOR
\STATE Identify $i^* = \arg\min_{i} m_i$, $R=m_{i^*}$
\STATE Compute $x_{i^*} = \mathrm{Aggregate}([\, Embed(p) \mid p \in s_{i^*} \,])$
% \STATE Compute 
% \[
% x_{i^*} = \frac{\sum\limits_{p \in s_{i^*}} \mathrm{sim}(p, q) \cdot \mathrm{Embed}(p)}{\sum\limits_{p \in s_{i^*}} \mathrm{sim}(p, q)}
% \]
% \STATE Identify $i^\ast = \arg\min_{i} m_i$
% \STATE Let $s_{i^\ast}$ be the corresponding $n$ data items
% % \STATE Average the embedding vectors of subset $s_{i^*}$ to obtain the final robust representation $x_{i^*}$.
% \STATE $\text{embeddings} \leftarrow [\, Embed(p) \mid p \in s_{i^*} \,]$
% \STATE $x_{i^*} \leftarrow \mathrm{Mean}(\text{embeddings})$
\STATE \textbf{return} $x_{i^\ast}$ as input to the LLMs
\end{algorithmic}
\end{algorithm}

\textcolor{blue}{We define the distance between two vectors $u, v \in \mathcal{V}_{\mathcal{X},n}$ as the angular distance: $d(u, v) = \arccos(\cos(u, v)) \in [0, \pi]$, where $\cos(u, v) = \frac{u \cdot v}{\|u\|\|v\|}$ denotes the cosine similarity.} 
% This is symmetric and satisfies the triangle inequality, although it may not be a strict metric. 
For a point $v \in \mathcal{V}_{\mathcal{X},n}$ and radius $R > 0$, define the ball: 
% $B(v, R) = \{ u \in \mathcal{V}_{\mathcal{X},n} : d(u, v) \leq R \}$.
\begin{equation}
 B(v, R) = \{ u \in \mathcal{V}_{\mathcal{X},n} : d(u, v) \leq R \}.
\end{equation}
Our method $\mathcal{F}$ aims to identify the center of a minimum-radius ball that contains more than half of the points, representing the desired output of the robust aggregation process.
% Our method $\mathcal{F}$ aims to identify the center of a minimum-radius ball, which represents the desired output of the robust aggregation process. 
% Since finding the minimum-radius enclosing ball in the entire space is an NP-hard problem, we adopt an approximate strategy by considering each point in the set $V_{\mathcal{X},n}$ as a candidate center of the ball~\cite{kumar2021center}.
% This center is selected to maximize resistance against adversarial perturbations. 
Formally, the objective is defined as:
\begin{equation}
\begin{aligned}
\mathcal{F}(\mathcal{X}) = \arg\min_{z} \; R \quad \text{s.t.} \quad 
& \sum_{w \in V_{\mathcal{X}, n}} \mathbf{1}_{\mathcal{B}(z, R)}(w) \\
& \geq \left\lfloor \frac{\binom{K}{n}}{2} \right\rfloor + 1,
\end{aligned}
\end{equation}
where $\mathcal{X}$ denotes the input set, $V_{\mathcal{X},n}$ is the vector collection of all $n$-element subsets of $\mathcal{X}$, $\mathcal{B}(z, R)$ denotes a ball of radius $R$ centered at $z$. The indicator function $\mathbf{1}_{\mathcal{B}(z, R)}(w)$ is defined to be $1$ if $w \in \mathcal{B}(z, R)$, and $0$ otherwise.
% \begin{equation}
%     \mathbf{1}_{\mathcal{B}(z, R)}(w)=\left\{\begin{array}{l}
% 1, \text { if } w \in \mathcal{B}(z,R) \\
% 0, \text { otherwise }.
% \end{array}\right.
% \end{equation}
This formulation ensures that the selected center $z$ lies within a ball that encompasses the majority of the subsets.
Since $S_{\mathcal{X},n}$ includes all possible combinations of the retrieved texts, the centroid $z$ corresponding to the selected subset $s$ effectively captures the meaning of the majority of texts in $\mathcal{X}$, thereby filtering out a small number of poisoned texts. 

Finally, we compute a weighted average of the embedding vectors of the texts in the selected subset $s_{i^*}$, where the weights are determined by the similarity (e.g., cosine similarity) between each text $p \in s_{i^*}$ and the query $q$:
\begin{equation}
x_{i^*} = \frac{\sum\limits_{p \in s_{i^*}} \mathrm{sim}(p, q) \cdot \mathrm{Embed}(p)}{\sum\limits_{p \in s_{i^*}} \mathrm{sim}(p, q)},
\end{equation}
which helps prevent poisoned texts from evading filtration due to semantic similarity with clean texts, yielding the final robust aggregation result. More details are provided in Algorithm~\ref{alg:inference}. 

% In white-box LLM scenarios, we can directly access the model’s embedding module to extract vector representations of the retrieved texts and perform robust aggregation before generating the final response. For black-box LLMs, where the embedding layer is inaccessible, we use open-source models to obtain the embeddings of the retrieved texts, aggregate them, and then decode the aggregated vector back into natural language text to serve as context input for the black-box model. This design ensures the applicability and flexibility of our method across various LLM deployment settings.

\section{Computing the Certified Maximum Deviation}
\label{sec:the}
In the context of retrieval-based text poisoning attacks on RAG systems, we assume that the attacker successfully injects $\varepsilon$ malicious documents into the final retrieved set $\mathcal{X}$, resulting in a perturbed set $\mathcal{X^{'} } = \left [ \tilde{p}_{1}, \tilde{p}_{2}, \dots, \tilde{p}_{\varepsilon } \right ] \cup \left [ {p_1, p_2, \dots, p_{K-\varepsilon} } \right ]$, where each $\tilde{p}_{i}$ represents a poisoned document inserted by the attacker, and each $p_i$ denotes a clean document. In the absence of poisoning attacks, the center of the minimum-radius ball computed by Algorithm~\ref{alg:inference} is used as the robust output. Under poisoning attacks, the radius $ \hat{R}$ is enlarged to counter the shift caused by the $\varepsilon$ poisoned texts, ensuring that over half of the embeddings remain within the ball. Accordingly, the radius is approximately set to $ \hat{R} \gets \mathcal{D}[k] $, where  
$k = \left\lfloor \frac{\binom{K}{n}}{2} \right\rfloor + \left( \binom{K}{n} - \binom{K-\varepsilon}{n} \right),$ as shown in Algorithm~\ref{alg:certify}. 
\textcolor{blue}{To ensure that the empirical radius $\hat{R}$ satisfies the theoretical majority condition, Lemma 1 (Section~\ref{lemma1}) guarantees that the majority ball contains a strict majority of clean combinations. Combined with the geometric separability assumption (Section~\ref{app:separability}), this ensures the center is anchored to the clean subset.} We denote the embedding shift of the final aggregated text caused by poisoned samples as $ d $, where $d = (1 + \beta) \hat{R}$. The proposed metric $d$ effectively quantifies the impact of poisoned texts on the retrieval results. A smaller $d$ indicates lower sensitivity to perturbations, implying stronger robustness and greater resistance to successful attacks.\\
\textbf{Certifying the Upper Bound of Deviation $d$.} Let $\mathcal{X'}$ be any corrupted version of $\mathcal{X}$ obtained by changing at
most $\varepsilon$ passages, that is $\|\mathcal{X'}-\mathcal{X}\|_{0}\le\varepsilon$.
Define the robustness radius $R$ as the smallest value satisfying
\begin{equation}
  \label{eq:R-def}
  \begin{aligned}
    \forall\,\mathcal{X'} \text{ s.t. } \|\mathcal{X'} - \mathcal{X}\|_{0} \le \varepsilon,\quad
    &\sum_{w \in \mathcal{V}_{\mathcal{X'},n}}
    \mathbf{1}_{\mathcal{B}(\mathcal{F}(\mathcal{X}), R)}(w) \\
    &\ge \left \lfloor \frac{\binom{K}{n} }{2}  \right \rfloor + 1.
  \end{aligned}
\end{equation}
% $\|\mathcal{X}' - \mathcal{X}\|_0$ denotes the number of poisoned passages. 
We present the following theorem:\\
\textbf{Theorem 1.} For $\mathcal{X'}$ satisfying $\|\mathcal{X'}-\mathcal{X}\|_{0}\le\varepsilon$,
\begin{equation}
  d\bigl(\mathcal{F}(\mathcal{X}),\,\mathcal{F}(\mathcal{X'})\bigr)\;\le\;2R, 
\end{equation}
where the proof is provided in Section~\ref{sec:proof1}.
However, computing $\mathcal{F}(\mathcal{X})$ exactly is computationally expensive in practice. Therefore, we use an approximation method from Center Smoothing~\cite{kumar2021center} to compute a $\beta$-MEB (Minimum Enclosing Ball) that contains the majority of elements in $\mathcal{V}_{\mathcal{X},n}$, denoted as:
\begin{equation}
  \label{eq:Fhat-def}
  \widehat{\mathcal{F}}(\mathcal{X}) =
  \begin{aligned}[t]
    \arg \min_{z \in \mathcal{V}_{\mathcal{X},n}} \; R \quad \text{s.t.} \quad
    &\sum_{w \in \mathcal{V}_{\mathcal{X},n}} \mathbf{1}_{\mathcal{B}(z,R)}(w) \\
    & \ge \left \lfloor \frac{\binom{K}{n}  }{2}  \right \rfloor + 1.
  \end{aligned}
\end{equation}
The radius of this approximation differs from the optimal radius by a multiplicative factor $\beta$. Similarly, the approximation of $R$, denoted as $\widehat{R}$, is defined to satisfy:

\begin{equation}
  \label{eq:robust-cert}
  \begin{aligned}
    \forall\, \mathcal{X}' \text{ s.t. } \|\mathcal{X}' - \mathcal{X}\|_0 \le \varepsilon, \quad 
    &\sum_{w \in \mathcal{V}_{\mathcal{X}',n}} 
    \mathbf{1}_{\mathcal{B}(\mathcal{F}(\mathcal{X}), \widehat{R})}(w) \\
    &\ge \left \lfloor \frac{\binom{K}{n}  }{2}  \right \rfloor + 1.
  \end{aligned}
\end{equation}
Then, we have the following theorem:\\
\textbf{Theorem 2.} For any $\mathcal{X}'$ satisfying $\|\mathcal{X}' - \mathcal{X}\|_0 \le \varepsilon$, we have:
\begin{equation}
    d(\mathcal{F}(\mathcal{X}), \mathcal{F}(\mathcal{X}')) \le (1 + \beta)\widehat{R},
\end{equation}
where the proof is provided in Section~\ref{sec:proof2}. We follow the practice in the work~\cite{kumar2021center} to use $\beta = 2$ as an approximation when computing the minimum enclosing ball:
\begin{equation}
    d(\mathcal{F}(\mathcal{X}), \mathcal{F}(\mathcal{X}')) \le 3\widehat{R}.
\end{equation}

\begin{algorithm}[t]
\caption{Certify Robustness under Poisoning}
\label{alg:certify}
\begin{algorithmic}[1]
\STATE \textbf{Input}: A set of Top-$K$ retrieved texts $\mathcal{X}$; subset size $n$ (with $2n < K$); maximum number of poisoned texts $\varepsilon$ (with $\binom{K}{n}< 2\binom{K-\varepsilon}{n}$).\\
\STATE \textbf{Output}: Certified maximum deviation $d$.
% \STATE A sample set $ \mathcal{S}_{\mathcal{X'}, n} $ of size $ L = \binom{K}{n} $ is sample from the retrieved set $ \mathcal{X'} $, and an vector set $\mathcal{V}_{\mathcal{X'}, n} = \{v_1, v_2, \ldots, v_{L}\}$ is computed for each subset in $ \mathcal{S}_{\mathcal{X'}, n} $.
\STATE Let $\mathcal{S}_{\mathcal{X'}, n} = \mathrm{Comb}(\mathcal{X'}, n)$
\STATE Initialize $\mathcal{V}_{\mathcal{X'}, n} \leftarrow \emptyset$
\FOR{each subset $s_i \in \mathcal{S}_{\mathcal{X'}, n}$}
    \STATE $v_i \leftarrow \mathrm{Concat}([\, Embed(p) \mid p \in s_i\,])$
    \STATE $\mathcal{V}_{\mathcal{X'}, n} \leftarrow \mathcal{V}_{\mathcal{X'}, n} \cup \{v_i\}$
\ENDFOR
\STATE Let $v_{\text{orig}}$ be the embedding vector of the subset selected by the Algorithm~\ref{alg:inference}.
\STATE Initialize an empty list of distances $\mathcal{D} \gets \emptyset$
\FOR{each embedding $v_i \in \mathcal{V}_{\mathcal{X'}, n}$ }
    \STATE $\textcolor{blue}{d_i \gets \arccos({\cos(v_{\text{orig}}, v_i)}))}$ \hfill 
    \STATE Append $d_i$ to $\mathcal{D}$
\ENDFOR
\STATE Sort $\mathcal{D}$ in ascending order
\STATE Let $\hat{R}  \gets \mathcal{D}[k], k = \left \lfloor \frac{\binom{K}{n}  }{2}  \right \rfloor +\left ( \binom{K}{n}  - \binom{K-\varepsilon}{n}  \right )$ 
\STATE Let $d = \left ( 1+\beta  \right ) \hat{R} $
\STATE \textbf{return} $d$ as the certified maximum deviation
\end{algorithmic}
\end{algorithm}

\textcolor{blue}{Note that while the theoretical formulation is based on the full set of $L = \binom{K}{n}$ combinations, our framework naturally generalizes to a Monte Carlo sampling approach for larger values of $K$ and $n$. In such cases, we sample $m \ll L$ subsets, which provides a statistically bounded approximation of the optimal robustness (see Section~\ref{sec:appendix_sampling_proof}) and our experimental results further corroborate its effectiveness (see Section~\ref{sec:Monte Carlo}).}

\begin{table*}[t]
    \centering
    \caption{Performance of PRA-RAG across different attack strategies and datasets under various LLMs (Top-$K=8$, subset size $n=3$, poisoning rate = 20\%). Attack abbreviations: PR = PoisonedRAG, DoS = DoS Attack, AD = AdvDec, CR = CorruptRAG, Cln = Clean.}
    \label{table: main_results}
    \renewcommand{\arraystretch}{1.15}
    \setlength{\tabcolsep}{3.5pt}
    \resizebox{1.0\linewidth}{!}{
    \begin{tabular}{ll|ccccc|ccccc|ccccc}
    \toprule
    & & \multicolumn{5}{c|}{\textbf{NQ}} 
    & \multicolumn{5}{c|}{\textbf{MS-MARCO}}  
    & \multicolumn{5}{c}{\textbf{HotpotQA}}\\
    \cmidrule(lr){3-7} \cmidrule(lr){8-12} \cmidrule(lr){13-17}
    \textbf{Models} & \textbf{Met.}
    & PR & DoS & AD & CR & Cln
    & PR & DoS & AD & CR & Cln
    & PR & DoS & AD & CR & Cln \\
    \midrule
    \multirow{3}{*}{Mistral-7B}
    & ACC & 0.62 & 0.57 & 0.55 & 0.55 & 0.60 & 0.63 & 0.65 & 0.61 & 0.59 & 0.57 & 0.37 & 0.33 & 0.39 & 0.41 & 0.38 \\ 
    & ASR & 0.01 & 0.02 & 0.02 & 0.04 & - & 0.01 & 0.03 & 0.02 & 0.00 & - & 0.03 & 0.01 & 0.02 & 0.03 & - \\ 
    & PAD & 1.23 & 1.43 & 1.30 & 1.21 & 0.79 & 1.37 & 1.46 & 1.51 & 1.39 & 0.90 & 1.33 & 1.55 & 1.43 & 1.31 & 0.83 \\
    \hline
    \multirow{3}{*}{Llama3-8B}
    & ACC & 0.47 & 0.49 & 0.52 & 0.49 & 0.50 & 0.69 & 0.71 & 0.68 & 0.67 & 0.63 & 0.30 & 0.33 & 0.31 & 0.30 & 0.29 \\
    & ASR & 0.01 & 0.02 & 0.01 & 0.03 & - & 0.03 & 0.06 & 0.05 & 0.00 & - & 0.04 & 0.05 & 0.04 & 0.04 & - \\ 
    & PAD & 0.95 & 1.04 & 0.97 & 0.93 & 0.60 & 1.09 & 1.19 & 1.06 & 1.11 & 0.75 & 1.06 & 1.17 & 1.07 & 1.04 & 0.65 \\
    \hline
    \multirow{3}{*}{Vicuna-7B}
    & ACC & 0.49 & 0.52 & 0.51 & 0.51 & 0.53 & 0.68 & 0.72 & 0.70 & 0.71 & 0.71 & 0.49 & 0.50 & 0.48 & 0.47 & 0.51 \\ 
    & ASR & 0.01 & 0.03 & 0.02 & 0.00 & - & 0.01 & 0.02 & 0.03 & 0.00 & - & 0.00 & 0.02 & 0.02 & 0.04 & - \\ 
    & PAD & 1.74 & 2.22 & 1.84 & 1.73 & 1.16 & 1.91 & 1.95 & 1.99 & 1.90 & 0.75 & 1.87 & 2.34 & 1.98 & 1.85 & 0.21 \\
    \hline
    \multirow{3}{*}{Qwen3-8B}
    & ACC & 0.53 & 0.61 & 0.64 & 0.55 & 0.64 & 0.69 & 0.65 & 0.67 & 0.63 & 0.68 & 0.49 & 0.45 & 0.40 & 0.42 & 0.45 \\ 
    & ASR & 0.06 & 0.03 & 0.02 & 0.01 & - & 0.00 & 0.00 & 0.01 & 0.00 & - & 0.09 & 0.05 & 0.02 & 0.04 & - \\ 
    & PAD & 1.01 & 0.91 & 1.05 & 0.98 & 0.68 & 0.95 & 0.89 & 0.98 & 1.01 & 0.71 & 0.95 & 0.88 & 0.91 & 0.84 & 0.62 \\
    \hline
    \multirow{3}{*}{\shortstack[l]{GPT-3.5\\-turbo}}
    & ACC & 0.48 & 0.52 & 0.52 & 0.45 & 0.47 & 0.75 & 0.75 & 0.74 & 0.70 & 0.78 & 0.40 & 0.40 & 0.38 & 0.35 & 0.39 \\ 
    & ASR & 0.00 & 0.02 & 0.00 & 0.03 & - & 0.01 & 0.00 & 0.02 & 0.00 & - & 0.07 & 0.08 & 0.02 & 0.06 & - \\ 
    & PAD & 0.95 & 1.04 & 0.97 & 0.93 & 0.61 & 1.09 & 1.19 & 1.13 & 1.11 & 0.75 & 1.06 & 1.17 & 1.07 & 1.04 & 0.65 \\ 
    \hline
    \multirow{3}{*}{\shortstack[l]{GPT-5\\-mini}}
    & ACC & 0.55 & 0.47 & 0.51 & 0.47 & 0.51 & 0.59 & 0.53 & 0.54 & 0.49 & 0.58 & 0.43 & 0.41 & 0.41 & 0.43 & 0.49 \\ 
    & ASR & 0.04 & 0.03 & 0.03 & 0.02 & - & 0.00 & 0.00 & 0.00 & 0.01 & - & 0.09 & 0.07 & 0.08 & 0.03 & - \\ 
    & PAD & 0.93 & 0.98 & 0.91 & 1.05 & 0.62 & 1.11 & 1.24 & 1.17 & 1.21 & 0.69 & 1.04 & 0.96 & 1.10 & 1.05 & 0.72 \\ 
    \bottomrule
    \end{tabular}}
\end{table*}

\section{Evaluation}
\subsection{Experimental Setup}
In this section, we provide a detailed description of the experimental setup, with additional information and default configurations presented in Section~\ref{sec:setup}.\\
\textbf{Datasets.} 
To evaluate the effectiveness of the proposed method,
we utilize three question-answering datasets: Natural Questions (NQ)~\cite{kwiatkowski2019natural}, MS-MARCO~\cite{bajaj2016ms}, and HotpotQA~\cite{yang2018hotpotqa}.\\
\textbf{LLM.} We evaluate PRA-RAG with different LLMs, including Mistral-7B~\cite{jiang2023mistral}, Llama3-8B, Vicuna-7B~\cite{chiang2023vicuna}, Qwen3-8B~\cite{yang2025qwen3}, GPT-3.5-Turbo~\cite{brown2020language}, and GPT-5-mini. \\
% Furthermore, in our method, the retriever is used to encode retrieved texts into embedding vectors. Therefore, we evaluate our method across a range of widely used retrievers, including Contriever~\cite{izacard2021unsupervised} (pre-trained), Contriever-ms (fine-tuned on MS-MARCO)~\cite{izacard2021unsupervised}, DPR-mul~\cite{karpukhin2020dense} (trained on multiple datasets), and ANCE~\cite{xiong2020approximate} (trained on MS-MARCO). \\
\textbf{Attackers.} We conduct a systematic evaluation of PRA-RAG's defense effectiveness against different types of poisoning attacks targeting the retrieved texts, following prior works~\cite{xiang2024certifiably, zhou2025trustrag}:(1) Corpus Poisoning Attack: PoisonedRAG~\cite{zou2024poisonedrag}, (2) Adversarial Decoding: AdvDec~\cite{zhang2025adversarialdecoding}, (3) Denial-Of-Service Attack: DoS Attack~\cite{shafran2024machine}, \textcolor{blue}{and (4) CorruptRAG~\cite{zhang2025practical}.} Further details of these attack methods are provided in Section~\ref{sec:poi}.
% In our experiments, we directly insert the poisoned text most similar to the query into the final set of retrieved texts to ensure that the poisoned texts effectively interfere with the retrieval results.
\\
\textbf{Defenders.} 
% Although various defense methods have been proposed to counter attacks on RAG, we highlight the advantages of our approach by comparing it with several representative state-of-the-art baselines,
We demonstrate the advantages of our approach by comparing it with representative state-of-the-art baselines, including RobustRAG~\cite{xiang2024certifiably}, InstructRAG~\cite{wei2024instructrag}, AstuteRAG~\cite{wang2025}, \textcolor{blue}{ TrustRAG~\cite{zhou2025trustrag}, and RAGForensics~\cite{zhang2025traceback}} with their default configurations. 
% to ensure defense effectiveness.
% , and TrustRAG~\cite{zhou2025trustrag}.
\\
\textbf{Evaluation Metrics.} We conduct a comprehensive evaluation of the proposed method using three metrics:
% (1) Accuracy (\textbf{ACC}) measures the proportion of correct answers generated by the RAG system under benign conditions, reflecting its fundamental retrieval and generation capabilities.
(1) Adversarial Accuracy (\textbf{ACC}) assesses the system's ability to produce correct answers when subjected to poisoning attacks;
(2) Attack Success Rate (\textbf{ASR}) represents the proportion of incorrect answers generated due to poisoning attacks. We use GPT-4o to determine whether the LLM’s responses are correct or influenced by poisoning, using prompts are provided in the Section~\ref{sec:prompt};
(3) Provable Average Deviation (\textbf{PAD}) 
% certifies an upper bound on semantic shift in the aggregated retrieval, quantifying residual poisoning and defense effectiveness; lower PAD means better mitigation and preserved semantics, higher indicates greater corruption.
provides a certified upper bound on semantic shifts in the aggregated retrieval representation.
% It measures the residual impact of poisoning attacks and the effectiveness of defense. 
Lower PAD values indicate successful mitigation and preserved 
semantics, while higher values signal greater semantic corruption, making PAD a key metric for attack strength and robustness.

\begin{figure*}[h]
    \centering
    \subfloat[Retrieved Passages Top-$K$]{\includegraphics[width=0.31\linewidth]{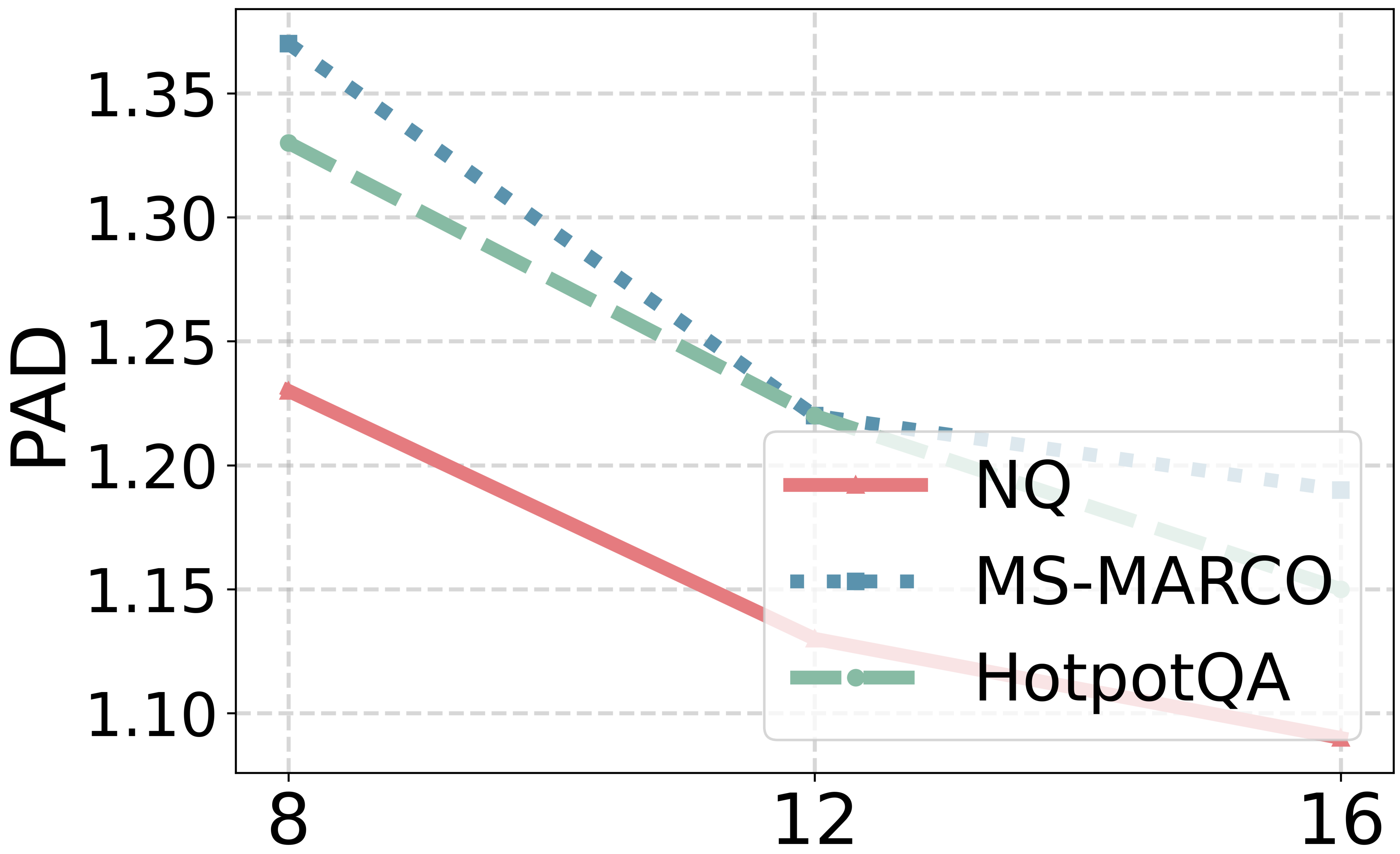}\label{fig:Top-k}}
    \hfill
    \subfloat[Corruption Size $\varepsilon$]{\includegraphics[width=0.31\linewidth]{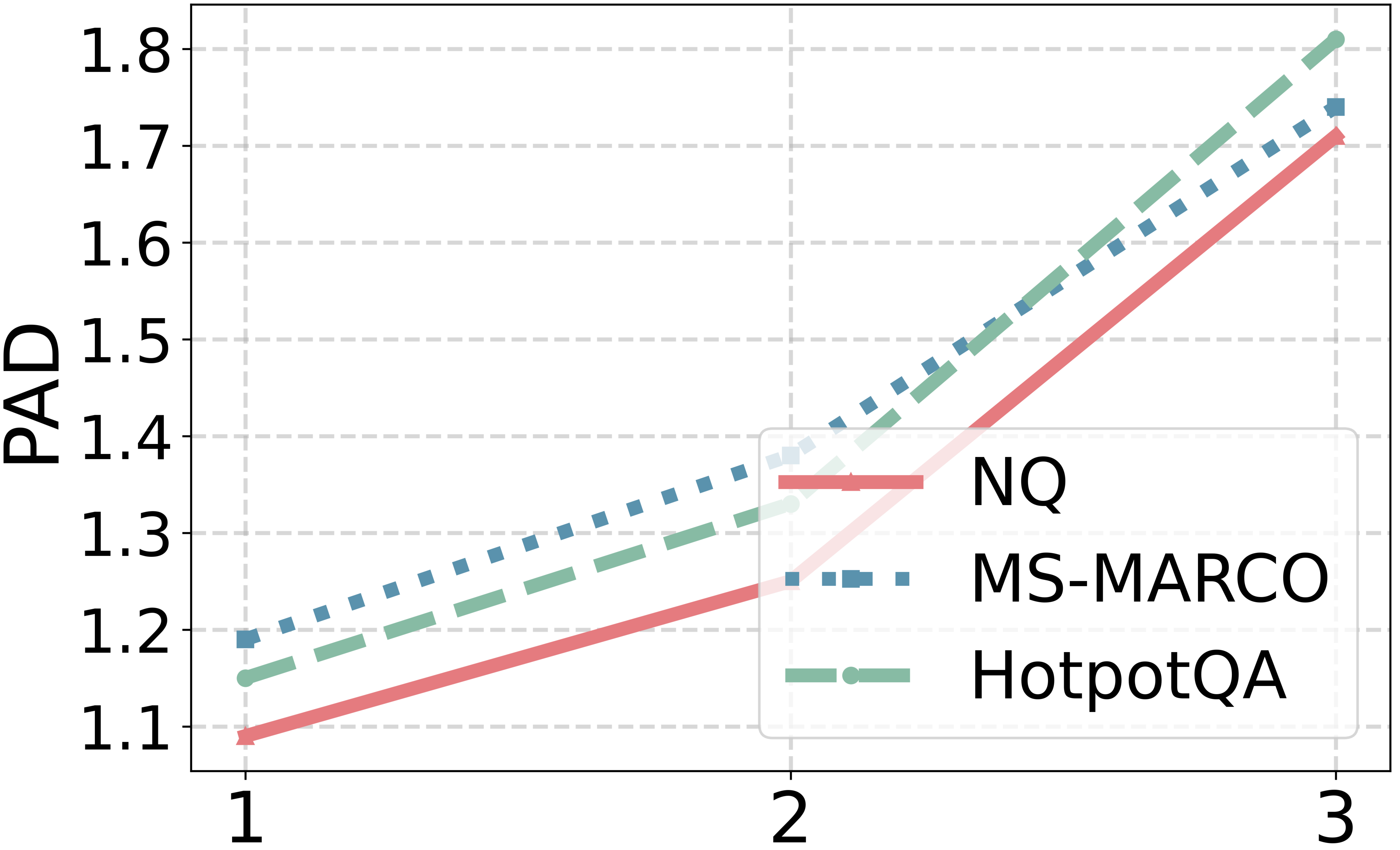}\label{fig:corrupt_size}}
    \hfill
    \subfloat[Subset Size $n$]{\includegraphics[width=0.31\linewidth]{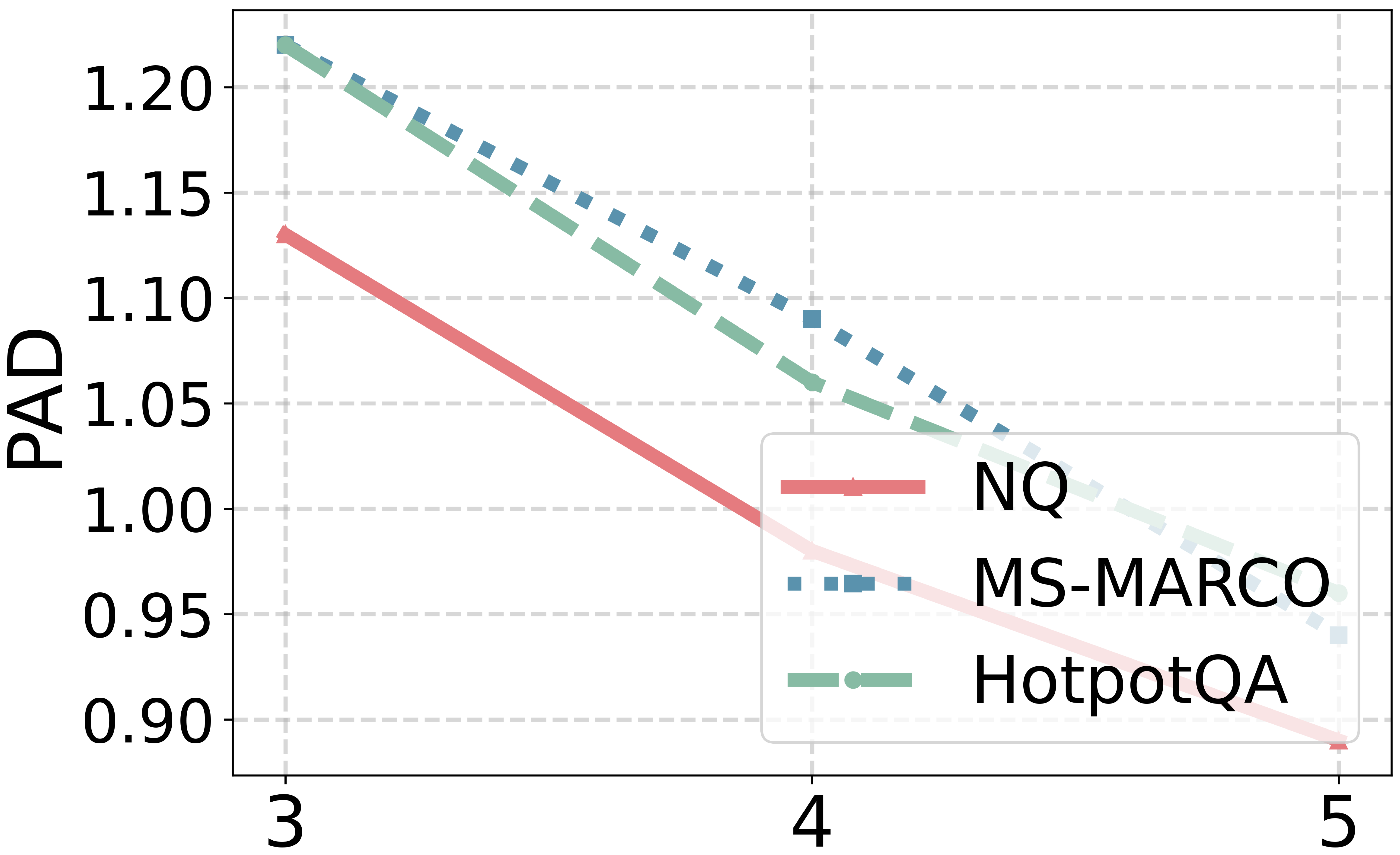}\label{fig:com_size}}
    \caption{Impact of Retrieval Passages, Corruption Size and Subset Size on PAD scores for Mistral-7B.}
    \label{fig:parameter}
\end{figure*}

\subsection{Overall Results}
\textbf{Main results of PRA-RAG.} Table~\ref{table: main_results} summarizes the performance of PRA-RAG under various poisoning attacks across multiple datasets and LLMs. The results demonstrate that PRA-RAG consistently achieves low ASR while maintaining high ACC, indicating strong robustness across all settings. For instance, on the MS-MARCO dataset with Mistral-7B, PRA-RAG attains an ACC of 62\% under PoisonedRAG, with an ASR as low as 1\%. Additionally, PAD effectively quantifies the semantic impact of adversarial retrievals, lower PAD values correspond to lower ASR and higher ACC, supporting its utility as a reliable indicator of both robustness and attack severity.\\
% Additionally, in Section~\ref{sec:methods}, we compare the effects of various embedding extraction methods on the final model performance.
%\noindent
\textbf{PRA-RAG outperforms baselines.} 
Table~\ref{table: baselines} presents a comparative evaluation of PRA-RAG against several representative baselines under poisoning attacks and clean settings. Across all datasets and attack settings, PRA-RAG consistently achieves the best overall performance, attaining higher ACC while significantly lowering the ASR. For instance, on the MSMARCO dataset with 20\% poisoned retrievals, PRA-RAG maintains an ACC of 68\% and suppresses ASR to just 1\% with Vicuna-7B, outperforming RobustRAG and InstructRAG. Meanwhile, we observe that in the absence of poisoning attacks, the $\text{ACC}_{\text{clean}}$ of our method remains largely unaffected. These results demonstrate that PRA-RAG not only offers stronger defense against poisoning attacks on retrieved texts but also preserves utility under normal conditions. Overall, existing methods either lack formal certification, trade usability for safety (over-refusal or over-filtering), or incur high cost with tight coupling to the LLM. PRA-RAG models retrieval poisoning as a set perturbation and performs robust aggregation on \emph{combination embeddings} by selecting a majority-covering MEB center, yielding a certified PAD bound. Under the common regime, it consistently lowers ASR with low overhead, maintains competitive ACC, and withstands semantically-near poisoning.

\begin{table*}[t]
    \centering
    \footnotesize
    \caption{Performance of PRA-RAG and baselines under PoisonedRAG attacks (ACC, ASR) and clean settings ($\text{ACC}_{\text{clean}}$). Results are based on the same experimental setup here for consistency and may differ from the original reports (Top-$K=8$, poisoning rate = 20\%). Best results are in \textbf{bold}.}
    \renewcommand{\arraystretch}{1.2}
    % \resizebox{1.0\linewidth}{!}{
    \begin{tabular*}{\linewidth}{@{\extracolsep{\fill}}ll|ccccccccc}
    \toprule
    \multirow{2}{*}{\textbf{Models}} & \multirow{2}{*}{\textbf{Defense}} 
    & \multicolumn{3}{c}{\textbf{NQ}} 
    & \multicolumn{3}{c}{\textbf{MS-MARCO}}  
    & \multicolumn{3}{c}{\textbf{HotpotQA}}\\
    \cmidrule(lr){3-5} \cmidrule(lr){6-8} \cmidrule(lr){9-11}
    & & $\text{ACC}_{\text{clean}}$ & ACC & ASR 
    & $\text{ACC}_{\text{clean}}$ & ACC & ASR 
    & $\text{ACC}_{\text{clean}}$ & ACC & ASR \\ 
    \midrule
    \multirow{8}{*}{Mistral-7B}
    & Vanilla RAG 
      & 0.61 & 0.26 & 0.46 & 0.58 & 0.33 & 0.18 & 0.55 & 0.17 & 0.48 \\ 
    & RobustRAG$_{\text{Keyword}}$
      & \textbf{0.67} & 0.66 & 0.09 & 0.61 & \textbf{0.66} & 0.12 & \textbf{0.57} & \textbf{0.56} & 0.24 \\ 
    & InstructRAG$_{\text{ICL}}$
      & 0.38 & 0.26 & 0.13 & 0.48 & 0.23 & 0.12 & 0.30 & 0.22 & 0.26 \\ 
    & ASTUTE RAG
      & 0.48 & 0.47 & 0.05 & 0.58 & 0.48 & 0.05 & 0.35 & 0.35 & 0.09 \\ 
    & TrustRAG
      & 0.65 & \textbf{0.62} & 0.03 & \textbf{0.67} & 0.64 & 0.08 & 0.50 & 0.45 & 0.10 \\ 
    & RAGForensics
      & 0.41 & 0.41 & 0.07 & 0.62 & 0.65 & 0.03 & 0.45 & 0.46 & 0.10 \\ 
    & PRA-RAG$_\text{cat}$
      & 0.57 & 0.47 & 0.09 & 0.59 & 0.59 & 0.04 & 0.50 & 0.41 & 0.11 \\ 
    & PRA-RAG
      & 0.60 & \textbf{0.62} & \textbf{0.01} & 0.57 & 0.63 & \textbf{0.01} & 0.49 & 0.37 & \textbf{0.03} \\
    \midrule
    \multirow{8}{*}{Vicuna-7B}
    & Vanilla RAG 
      & \textbf{0.67} & 0.31 & 0.45 & 0.73 & 0.35 & 0.33 & \textbf{0.55} & 0.23 & 0.43 \\ 
    & RobustRAG$_{\text{Keyword}}$
      & 0.48 & 0.40 & 0.08 & 0.59 & 0.40 & 0.07 & 0.44 & 0.31 & 0.10 \\ 
    & InstructRAG$_{\text{ICL}}$
      & 0.33 & 0.10 & 0.11 & 0.54 & 0.33 & 0.20 & 0.31 & 0.09 & 0.11 \\ 
    & ASTUTE RAG
      & 0.50 & 0.44 & 0.19 & \textbf{0.74} & 0.63 & 0.14 & 0.34 & 0.26 & 0.30 \\ 
    & TrustRAG
      & 0.51 & 0.49 & 0.06 & 0.62 & 0.55 & 0.06 & 0.36 & 0.30 & 0.07 \\ 
    & RAGForensics
      & 0.60 & \textbf{0.58} & 0.03 & 0.68 & \textbf{0.71} & 0.04 & 0.51 & \textbf{0.51} & 0.06 \\ 
    & PRA-RAG$_\text{cat}$
      & 0.54 & 0.53 & 0.04 & 0.65 & \textbf{0.71} & 0.02 & 0.49 & 0.37 & 0.13 \\ 
    & PRA-RAG
      & 0.53 & 0.49 & \textbf{0.01} & 0.63 & 0.68 & \textbf{0.01} & 0.50 & 0.49 & \textbf{0.00} \\
    \bottomrule
    \end{tabular*}
    \label{table: baselines}
\end{table*}

\subsection{Ablation Study}
\textbf{Impact of retrieved passages Top-$K$, corruption size $\varepsilon$ and subset size $n$.}
The number of retrieved texts Top-$K$ and poisoned texts  $\varepsilon$ significantly affect both attack effectiveness and defense performance. Figure~\ref{fig:Top-k} illustrates the effect of varying the number of retrieved texts (Top-$K = 8, 12, 16$) on the PAD score of PRA-RAG, with $\varepsilon = 1$ fixed. As Top-$K$ increases, the PAD score gradually declines, indicating enhanced robustness due to increased retrieval diversity. Figure~\ref{fig:corrupt_size} presents the opposite setting: with the number of retrieved texts fixed at Top-$K = 12$, the PAD score increases as the number of poisoned texts rises from $\varepsilon = 1$ to $\varepsilon = 3$. This trend indicates that injecting more poisoned texts into the retrieval results strengthens the attack, thereby diminishing the robustness of PRA-RAG. Figure~\ref{fig:com_size} illustrates the impact of different subset sizes on the performance of PRA-RAG under the setting where the number of retrieved texts is Top-$K = 12$ and the number of poisoned texts is $\varepsilon = 1$. We evaluate cases where each combination contains 3, 4, or 5 texts. The results show a clear downward trend in PAD scores as the number of texts per subset increases. This indicates that incorporating more clean texts within each combination helps dilute the influence of poisoned content.\\ 
% Further experimental data and analyses can be found in Section~\ref{sec:parameters}.\\
\textbf{Comparison of aggregation strategies.} We compare two aggregation methods for the selected subset: a baseline that concatenates texts as input to the LLM (PRA-RAG$_\text{cat}$), and our method (PRA-RAG), which computes a weighted average of embeddings. As shown in Table~\ref{table: baselines}, our approach achieves higher ACC and significantly reduces ASR by better mitigating the impact of poisoned texts. 
% In Section~\ref{sec:dis}, we further compare and analyze the impact of different distance metrics used to compute $d$ on the performance of our method.
% In our framework, we introduce a robustness-aware selection strategy to identify the safest subset from multiple combinations of retrieved texts. For the selected subset, we compare two aggregation methods: a baseline that directly concatenates the selected texts as input to the LLM (PRA-RAG$_\text{cat}$), and our proposed approach (PRA-RAG), which computes a weighted average of the embedding vectors to produce the final aggregated retrieval representation. As shown in Table~\ref{table: baselines}, the weighted averaging method not only achieves higher ACC but also significantly reduces ASR by effectively mitigating the influence of poisoned text.\\

\subsection{Efficiency}

In real-world applications, response time is critical to the user experience of RAG systems. However, most existing defense strategies often incur varying levels of inference overhead. 
In our experiments, we evaluated the average response latency over 100 query examples to compare different defense methods. As shown in Table~\ref{table:latency}, our proposed PRA-RAG achieves the lowest inference latency, demonstrating its practical efficiency while preserving strong robustness. In our framework, both the number of retrieved texts (Top-$K$) and the subset size have a significant impact on system efficiency. Therefore, we conduct a more in-depth analysis and empirical evaluation of the computational overhead in Section~\ref{sec:cost}. \textcolor{blue}{Additionally, Section~\ref{sec:Monte Carlo} presents the computational overhead and analysis under Monte Carlo sampling.}

\renewcommand{\arraystretch}{1.8} % 默认是 1.0，调大数值可增大行距
\begin{table}[t]
\centering
\footnotesize
\caption{The table presents the average response latency of different RAG methods across datasets.}
\resizebox{\linewidth}{!}{\huge
\begin{tabular}{lcccc}
\toprule
\textbf{Dataset} & \textbf{ASTUTE RAG} & \textbf{InstructRAG$_{\text{ICL}}$} & \textbf{RobustRAG$_{\text{Keyword}}$} & \textbf{PRA-RAG} \\
\midrule
NQ        & 13.68s & 6.89s & 27.01s &\textbf{ 5.98s} \\ 
MS-MARCO  & 13.84s & 7.15s & 26.38s & \textbf{6.20s} \\ 
HotpotQA  & 16.78s & 7.01s & 25.87s & \textbf{6.17s} \\ 
\bottomrule
\end{tabular}}
\label{table:latency}
\end{table}

\section{Conclusion}

This paper presents PRA-RAG, a provably robust aggregation method for defending against retrieval-level poisoning attacks in RAG systems. We theoretically derive an upper bound on the semantic deviation introduced by poisoned retrievals and propose Provable Average Deviation (PAD) as a unified metric for evaluating both robustness and attack strength. Unlike traditional defenses that depend on model outputs, PRA-RAG ensures semantic stability directly in the embedding space, providing a principled, model-agnostic robustness guarantee. Extensive experiments show that PRA-RAG significantly enhances robustness against poisoning attacks across diverse datasets and language models, consistently lowering attack success rates while preserving high answer accuracy. 
% In future work, we will extend PAD to other retrieval-based tasks, including fact verification and knowledge-grounded dialogue, where robustness is essential.

% In future work, we plan to extend PAD to other retrieval-based tasks such as fact verification and knowledge-grounded dialogue, where retrieval robustness remains essential.

% Moreover, PAD provides a reliable and interpretable signal for quantifying the severity of semantic corruption, making it a valuable tool for analyzing system resilience. 

\noindent\textbf{Limitations.}

\noindent Our work has the following limitations:
\begin{itemize}
  \item This work selects the most robust subset of the retrieved texts as the final input to the LLM for answer generation. Although the selected subset may not always produce the optimal response from the LLM, as shown in our experiments, the impact on model utility remains limited while significantly enhancing robustness against poisoning attacks.
  \item Our approach enhances robustness by constructing subsets, which increases computational and response-time overhead. However, experiments show that as the number and size of subsets grow, the defense against poisoning attacks significantly improves. In practice, users can flexibly balance efficiency and security based on their needs. 
  \item The effectiveness of our approach relies on the number of poisoned texts remaining below a certain threshold. Although an attacker could potentially bypass the defense through large-scale poisoning, this incurs substantial costs and significantly reduces the quality of the context, making it difficult for even humans to provide correct answers.
\end{itemize}

\noindent\textbf{Ethics Statement}

\noindent The goal of this work is to defend against retrieval-based poisoning attacks in RAG systems. All data used in this study is publicly available, ensuring no additional privacy concerns. The source code and software will be released as open-source. While this openness may expose the system to adaptive attacks, our approach can be further strengthened by incorporating additional internal and external information. 
% Overall, we believe our method contributes to advancing the secure deployment of RAG systems.

\noindent\textbf{Acknowledgment.}

\noindent Xue Tan, Yi Zheng, Chang Huo, Yunruo Zhang, Yu Liu, Hao Luan, Zhuyang Yu, and Ping Chen were funded in part by the National Key R\&D Program of China 2023YFB3107404.

\bibliography{ref}

\appendix

\section{Theoretical Analysis and Proofs}
\label{sec:proof}
\subsection{Proof of Theorem 1}
\label{sec:proof1}
Consider two balls $\mathcal{B}(\mathcal{F}(\mathcal{X'}), r^*(\mathcal{X'}))$ and $\mathcal{B}(\mathcal{F}(\mathcal{X}), R)$. According to the definitions of $R$ and $r^*(\mathcal{X'})$, these two balls must share at least one common element, denoted as $w^*$. Since the defined distance based on vector angles satisfies the triangle inequality, we have:
\begin{equation}
  \label{eq:triangle}
  \begin{aligned}
    d(\mathcal{F}(\mathcal{X}), \mathcal{F}(\mathcal{X'})) 
    &\le d(\mathcal{F}(\mathcal{X}), w^*) + d(\mathcal{F}(\mathcal{X'}), w^*) \\
    &\le R + r^*(\mathcal{X'}).
  \end{aligned}
\end{equation}
Since the ball $\mathcal{B}(\mathcal{F}(\mathcal{X}), R)$ contains at least a majority of elements in $\mathcal{V}_{\mathcal{X'},n}$, the smallest radius $r^*(\mathcal{X'})$ of a ball that contains a majority of elements in $\mathcal{V}_{\mathcal{X'},n}$ must not be greater than $R$, i.e., $r^*(\mathcal{X'}) \le R$. Therefore,
\begin{equation}
d(\mathcal{F}(\mathcal{X}), \mathcal{F}(\mathcal{X'})) \le 2R.
\end{equation}

\subsection{Proof of Theorem 2}
\label{sec:proof2} 
Since the radius of the $\beta$-MEB differs from the optimal radius by a factor $\beta$, i.e., the radius of the $\beta$-MEB over $\mathcal{V}_{\mathcal{X},n}$ is $\beta r^*(\mathcal{X}')$, by the same reasoning as in Theorem 1, we have:
\begin{equation}
  \label{eq:chain-ineq}
  \begin{aligned}
    d(\widehat{\mathcal{F}}(\mathcal{X}), \widehat{\mathcal{F}}(\mathcal{X}')) 
    &\le d(\widehat{\mathcal{F}}(\mathcal{X}), w^*) + d(\widehat{\mathcal{F}}(\mathcal{X}'), w^*) \\
    &\le \widehat{R} + \beta r^*(\mathcal{X}') 
    \le (1 + \beta)\widehat{R}.
  \end{aligned}
\end{equation}

\subsection{Scalability and Approximation via Monte-Carlo Sampling}
\label{sec:appendix_sampling_proof}

While the theoretical formulation in Section~\ref{sec:the} relies on the full set of $L = \binom{K}{n}$ combinations, the computational cost grows combinatorially as $K$ and $n$ increase. To address this bottleneck, we introduce a Monte-Carlo sampling strategy: instead of enumerating all $L$ combinations, we uniformly sample $m \ll L$ subsets to estimate the robust aggregation. Below, we establish a strict theoretical lower bound on the sampling approximation and provide an empirically supported upper bound, demonstrating how this strategy effectively approximates the optimal robustness radius.

\begin{proposition}
\textbf{Lower Bound of Monte-Carlo Approximation.}
\label{prop:sampling_lower}
Let $\mathcal{V}_{\mathcal{X}, n}$ be the full set of embedding vectors generated from $\binom{K}{n}$ combinations, and let $R$ be the radius of the Minimum Enclosing Ball (MEB) covering the majority ($>50\%$) of $\mathcal{V}_{\mathcal{X}, n}$. Let $\mathcal{S} \subset \mathcal{V}_{\mathcal{X}, n}$ be a uniformly random sample of size $m$, and let $R^*$ be the radius of the MEB covering the majority of $\mathcal{S}$. For any $\tau \in (0,\,0.5)$ and failure probability $\eta > 0$, if the sample size satisfies $m \ge \frac{1}{2\tau^2}\ln\frac{2}{\eta}$ and the majority coverage fraction $q > 0.5 + \tau$, then with probability at least $1-\eta$:
\begin{equation}
    R^* \le R.
\end{equation}
\end{proposition}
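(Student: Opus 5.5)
The plan is to fix the optimal full-population ball and show, by a Hoeffding bound, that it still covers a strict majority of the sample; minimality of $R^*$ then gives the claim.

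\emph{Step 1 (fix the ball).} Fix the optimal full-population ball $\mathcal{B}(z, R)$, where $z=\mathcal{F}(\mathcal{X})$ or its approximation $\widehat{\mathcal{F}}(\mathcal{X})$, so that $z\in\mathcal{V}_{\mathcal{X},n}$. Let
\[
q=\frac{1}{\binom{K}{n}}\sum_{w\in\mathcal{V}_{\mathcal{X},n}}\mathbf{1}_{\mathcal{B}(z,R)}(w)
\]
be its coverage fraction. The hypothesis states $q>0.5+\tau$.

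\emph{Step 2 (concentration).} Write the sample as $\mathcal{S}=\{w_1,\ldots,w_m\}$, drawn uniformly (with replacement; without replacement Hoeffding's bound still applies, by Hoeffding's own comparison result). Define the indicators $Y_j=\mathbf{1}_{\mathcal{B}(z,R)}(w_j)$. These are i.i.d.\ Bernoulli$(q)$ variables taking values in $[0,1]$. Let $\hat q=\frac1m\sum_j Y_j$. Hoeffding's inequality gives
\[
\Pr[|\hat q-q|\ge\tau]\le 2e^{-2m\tau^2}.
\]
The one-sided version would suffice, but the stated constant $\ln(2/\eta)$ matches the two-sided bound. The assumption $m\ge\frac{1}{2\tau^2}\ln\frac2\eta$ makes the right-hand side at most $\eta$. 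Hence with probability at least $1-\eta$ we have $\hat q> q-\tau>0.5$. On this event the ball $\mathcal{B}(z,R)$ contains strictly more than half of the sampled points.

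\emph{Step 3 (minimality).} $R^*$ is defined as the minimum radius over balls covering a majority of $\mathcal{S}$. Because $\mathcal{B}(z,R)$ is a feasible such ball, we get $R^*\le R$.

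\emph{Main obstacle.} The difficulty is pinning down the feasible set of centers in the definition of $R^*$. If the sample MEB is restricted to centers in $\mathcal{S}$, as in the practical $\widehat{\mathcal{F}}$, then $z$ need not lie in $\mathcal{S}$, and Step 3 breaks. I would handle this in one of two ways.
\begin{itemize}
\item Take $R^*$ to allow arbitrary centers (or centers in $\mathcal{V}_{\mathcal{X},n}$), matching the continuous definition of $\mathcal{F}$. Then the argument above goes through directly.
\item Keep the sample-restricted definition and use the triangle inequality for the angular distance. Any sampled point $w\in\mathcal{B}(z,R)$ satisfies $\mathcal{B}(z,R)\subseteq\mathcal{B}(w,2R)$. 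This only yields $R^*\le 2R$, so the clean statement $R^*\le R$ really requires the unrestricted-center definition.
\end{itemize}
I would therefore state explicitly that $R^*$ is taken over centers in $\mathcal{V}_{\mathcal{X},n}$, and note that the discretization would otherwise contribute the factor $\beta$, as in Theorem 2.

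One further detail: a "strict majority of $m$" means at least $\lfloor m/2\rfloor+1$ points, which is exactly the condition $\hat q>0.5$ obtained in Step 2.
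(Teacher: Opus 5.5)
Your proposal is correct and follows the paper's proof essentially step for step: fix the optimal full-population ball, apply Hoeffding's inequality to its coverage indicator so that with probability at least $1-\eta$ it covers more than a $q-\tau>0.5$ fraction of $\mathcal{S}$, and conclude $R^*\le R$ by minimality. Your remark on admissible centers addresses a point the paper leaves implicit: restricting the centers for $R^*$ to $\mathcal{S}$ would only yield $R^*\le 2R$. Your resolution, taking the center class for $R^*$ to contain the center of the full-population ball, is sound.
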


\begin{proof}
\textit{Proof of Lower Bound.}
% \textit{Coverage Concentration.}
For any fixed ball $\mathcal{B}$ in the embedding space, let $q$ denote the fraction of $\mathcal{V}_{\mathcal{X},n}$ within $\mathcal{B}$, and $\hat{q}$ the fraction in $\mathcal{S}$. By Hoeffding's Inequality~\cite{hoeffding1963probability}:
\begin{equation}
    \mathbb{P}(|\hat{q} - q| \ge \tau) \le 2\exp(-2m\tau^2).
\end{equation}
Under the condition $m \ge \frac{1}{2\tau^2}\ln\frac{2}{\eta}$, we have $2\exp(-2m\tau^2) \le \eta$. For instance, with $m = 200$ and $\tau = 0.1$, the failure probability is below $0.04$.

\textit{Lower Bound Derivation.}
Let $\mathcal{B}^*$ be the optimal ball with radius $R$ covering a fraction $q > 0.5 + \tau$ of $\mathcal{V}_{\mathcal{X},n}$. By the coverage concentration above, $\mathcal{B}^*$ covers a fraction at least $q - \tau > 0.5$ of $\mathcal{S}$ with high probability. Thus $R$ is a feasible majority-covering radius for $\mathcal{S}$. Since $R^*$ is the minimum such radius over $\mathcal{S}$:
\begin{equation}
    R^* \le R.
\end{equation}
\end{proof}

\begin{remark}
\textbf{Tightness of the Approximation.}
\label{remark:upper_bound}
The lower bound $R^* \le R$ guarantees that sampling does not overestimate the robustness radius. A complementary strict upper bound of the form $R \le (1+\tau)\,R^*$ would require guaranteeing that the random sample $\mathcal{S}$ captures the extremal boundary points that determine the MEB of the full set. Since these boundary points may constitute a vanishing fraction of $\mathcal{V}_{\mathcal{X},n}$, such a deterministic worst-case guarantee cannot be established from uniform sampling alone without additional structural assumptions.
 
However, in our setting, the Geometric Separability Assumption (Section~\ref{app:separability}) implies that clean combination embeddings form a concentrated cluster, which facilitates effective geometric approximation from moderate-sized samples. Empirically, across all experimental configurations in Section~\ref{sec:Monte Carlo}, the difference between $R$ and $R^*$ is consistently small. Tables~\ref{table: topk} and~\ref{table: n} show that the PAD values under Monte-Carlo sampling closely match those under full enumeration, with deviations typically within $3\%$, confirming that the approximation introduces no significant performance degradation. Furthermore, ACC and ASR remain virtually unchanged, and the overall trends (e.g., decreasing PAD with increasing Top-$K$ and $n$) are fully preserved under sampling.
\end{remark}
 
Combining the strict lower bound with the empirically validated tightness, Monte-Carlo sampling with $m \approx 200$ yields an effective approximation of the true robustness metric, decoupling the computational complexity from the combinatorial growth of $\binom{K}{n}$ and reducing it to a constant factor determined by the sample size $m$.

\subsection{\textcolor{blue}{Guarantee of Clean Majority in the Certified Ball}}
\label{lemma1}
\textcolor{blue}{\textbf{Lemma 1.} Let $\mathcal{V}$ be the set of all $L = \binom{K}{n}$ subset embeddings derived from the retrieved documents. Let $\mathcal{C} \subset \mathcal{V}$ denote the set of clean embeddings and $\mathcal{P} \subset \mathcal{V}$ denote the set of poisoned embeddings, such that $|\mathcal{P}| \le L_{adv} = \binom{K}{n} - \binom{K-\varepsilon}{n}$.
If the certified radius $\hat{R}$ is determined by the distance value at the $k$-th index (0-based) of the sorted distance vector to the geometric center, with $k = \lfloor L/2 \rfloor + L_{adv}$, then the ball $\mathcal{B}(z, \hat{R})$ is guaranteed to strictly contain a majority of clean embeddings, i.e., $|\mathcal{B}(z, \hat{R}) \cap \mathcal{C}| \ge \lfloor L/2 \rfloor + 1$.
}\\
\textcolor{blue}{
\noindent \textbf{\textit{Proof.}} Let $\mathcal{S}$ denote the set of embeddings enclosed by the ball $\mathcal{B}(z, \hat{R})$. Since the radius $\hat{R}$ corresponds to the distance at the $k$-th index (where index 0 represents the center itself), the set $\mathcal{S}$ includes the first $k+1$ smallest distance embeddings. Thus, by definition, $|\mathcal{S}| = k + 1$.
The set $\mathcal{S}$ is composed of disjoint clean and poisoned subsets: $\mathcal{S} = (\mathcal{S} \cap \mathcal{C}) \cup (\mathcal{S} \cap \mathcal{P})$.
We seek a lower bound on the number of clean samples $|\mathcal{S} \cap \mathcal{C}|$.
According to set theory:
\begin{equation}
    |\mathcal{S} \cap \mathcal{C}| = |\mathcal{S}| - |\mathcal{S} \cap \mathcal{P}|
\end{equation}
In the worst-case adversarial scenario, the adversary optimizes the poisoned embeddings to be geometrically closest to the center to occupy the top ranks. However, the total number of poisoned embeddings is strictly bounded by $L_{adv}$. Thus, for any selected subset, $|\mathcal{S} \cap \mathcal{P}| \le L_{adv}$ always holds.
Substituting the cardinality $|\mathcal{S}| = k + 1$ and the value of $k$:
\begin{align}
    |\mathcal{S} \cap \mathcal{C}| &\ge (k + 1) - L_{adv} \notag \\
    &= (\lfloor L/2 \rfloor + L_{adv} + 1) - L_{adv} \notag \\
    &= \lfloor L/2 \rfloor + 1
\end{align}
Therefore, the ball $\mathcal{B}(z, \hat{R})$ necessarily contains at least $\lfloor L/2 \rfloor + 1$ clean embeddings. Since $L = |\mathcal{V}|$, this count represents a strict majority of the total universe of combinations.
}

\subsection{\textcolor{blue}{Geometric Separability Assumption}}
\label{app:separability}

\textcolor{blue}{We assume a mild geometric separability condition to connect majority coverage
over combination embeddings to clean subset selection.
Specifically, there exists a center $z^*$ and a radius $R_c$
such that a strict majority of fully clean combination embeddings
lie within the ball $\mathcal{B}(z^*, R_c)$,
while any combination containing at least one poisoned passage lies outside
$\mathcal{B}(z^*, R_c + \Delta)$ for some $\Delta > 0$.}

\textcolor{blue}{The separability condition is necessary to ensure that a ball containing a strict majority of clean embeddings is geometrically anchored to the clean subset, thereby enabling the selection of clean contexts through geometric aggregation. When the margin $\Delta = 0$, clean and poisoned combination embeddings become geometrically indistinguishable. In such cases, no geometry-based majority certification method can guarantee the selection of a fully clean subset, as the observed information no longer provides distinguishable features.}

\section{Details}
\subsection{Details of Experiment Setup}
\label{sec:setup}
\textbf{Dataset.}
Detailed statistics are shown in Table~\ref{table:datasets}.
\begin{table}[htbp]
\centering
\footnotesize
\caption{Statistics of datasets.}
\resizebox{\linewidth}{!}{ % This will resize the table to fit the text width
\begin{tabular}{lcc}
\toprule
\textbf{Datasets} & \textbf{Texts in knowledge database} & \textbf{Questions} \\
\midrule
NQ~\cite{kwiatkowski2019natural} & 2,681,468 & 3,452  \\ 
MS-MARCO~\cite{bajaj2016ms}  & 5,233,329 & 7,405  \\ 
HotpotQA~\cite{yang2018hotpotqa}  & 8,841,823 & 6,980  \\ 
\bottomrule
\end{tabular}}
\label{table:datasets}
\end{table}\\
\textbf{Default Setting.} Unless otherwise specified, our experiments use the following default settings: the Contriever~\cite{izacard2021unsupervised} retriever is employed to select the Top-$K=8$ texts most similar to the query from the MS-MARCO dataset, with poisoned texts generated according to the PoisonedRAG~\cite{zou2024poisonedrag} method constituting 20\% (rounded down) of the retrieval results. Subsequently, multiple subsets are sampled from the retrieved texts, each containing $n=3$ texts. Open-source LLMs can directly obtain embeddings through their embedding layers and generate the final robust aggregated representation. For black-box models (e.g., GPT-3.5-turbo and GPT-5-mini), we use Llama3-8B in our experiments to extract embeddings and decode the robust aggregation results into text to serve as context. Finally, the Mistral-7B language model generates responses based on these retrieved texts. Each result is averaged over 10 runs under a consistent setup.\\
\textbf{Implementation and Resources.} 
% We report the mean performance over 10 independent runs and assess the statistical significance of improvements or degradations using the Wilcoxon signed-rank test at a significance level of $p \le 0.05$. 
We conduct experiments on a server with 64 AMD EPYC 9654 CPUs at 2.40–3.70 GHz, 512 GB of DDR5 RAM, and four NVIDIA RTX A100 GPUs, each with 80 GB HBM2e memory.

\subsection{Prompt}
\label{sec:prompt}

\begin{figure}[H]
    %\vspace{-7mm}
    \centering
    \includegraphics[width=0.9\linewidth]{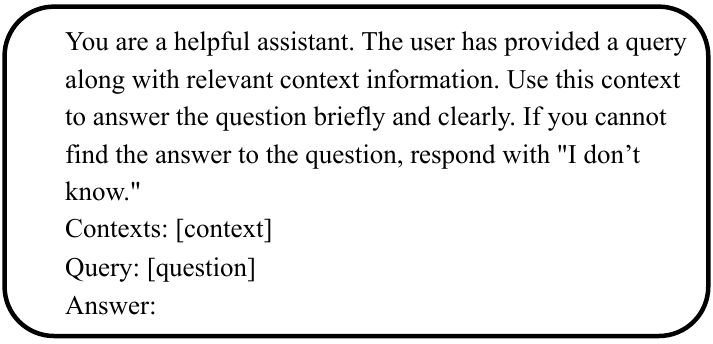}
    \caption{The prompt used in RAG to make an LLM generate an answer based on the retrieved texts.}
    \label{fig:prompt1}
\end{figure}

\begin{figure}[H]
    %\vspace{-7mm}
    \centering
    \includegraphics[width=0.9\linewidth]{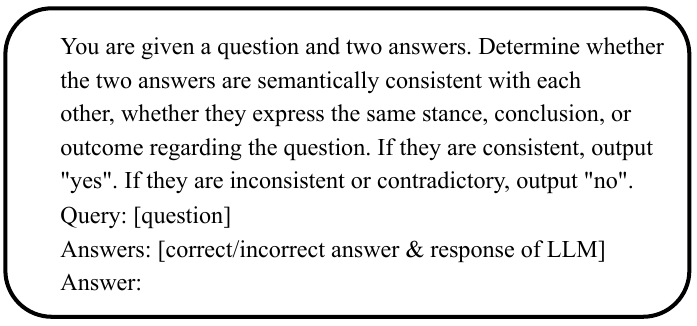}
    \caption{The prompt used in RAG to determine whether the LLM’s response is correct or corresponds to the attacker’s target response.}
    \label{fig:prompt2}
\end{figure}

\subsection{Poisoned Texts Generation}
\label{sec:poi}
To comprehensively evaluate the robustness of our approach against different poisoned text generation strategies, we adopt three representative strategies, including PoisonedRAG~\cite{zou2024poisonedrag}, Adversarial Decoding~\cite{zhang2025adversarialdecoding}, Denial-Of-Service Attack~\cite{shafran2024machine}, \textcolor{blue}{and CorruptRAG~\cite{zhang2025practical}}. In PoisonedRAG~\cite{zou2024poisonedrag} method, the attacker begins by selecting a target question and its corresponding incorrect answer, then crafts poisoned texts to satisfy two key requirements: (1) being retrievable by the retriever and (2) successfully misleading the language model into producing the incorrect answer. Adversarial Decoding~\cite{zhang2025adversarialdecoding} is a token-level beam search framework that integrates continuous, task-specific scorers into standard decoding, enabling the generation of fluent, low-detectability texts that jointly optimize retrieval similarity and adversarial generation objectives. Denial-Of-Service Attack~\cite{shafran2024machine} inserts a single blocker document consisting of a retrieval component to ensure retrieval and a jamming component to trigger refusal responses, crafted through instruction injection, oracle generation, or black-box optimization. \textcolor{blue}{CorruptRAG~\cite{zhang2025practical} can efficiently mislead the model into generating targeted false content by injecting only a single poisoned text that simultaneously includes the target query, which ensures it is preferentially retrieved, and an adversarial prompting template, which exploits the LLM’s generation bias to label the correct answer as outdated and to fabricate a supposedly latest but incorrect answer.} In our experiments, we adopt the default parameters of these methods to generate the corresponding poisoned texts, ensuring effective poisoning.

\label{sec:cluter}
\begin{figure}[H]
    \centering
    \subfloat[Mistral-7B]{\includegraphics[width=0.5\linewidth]{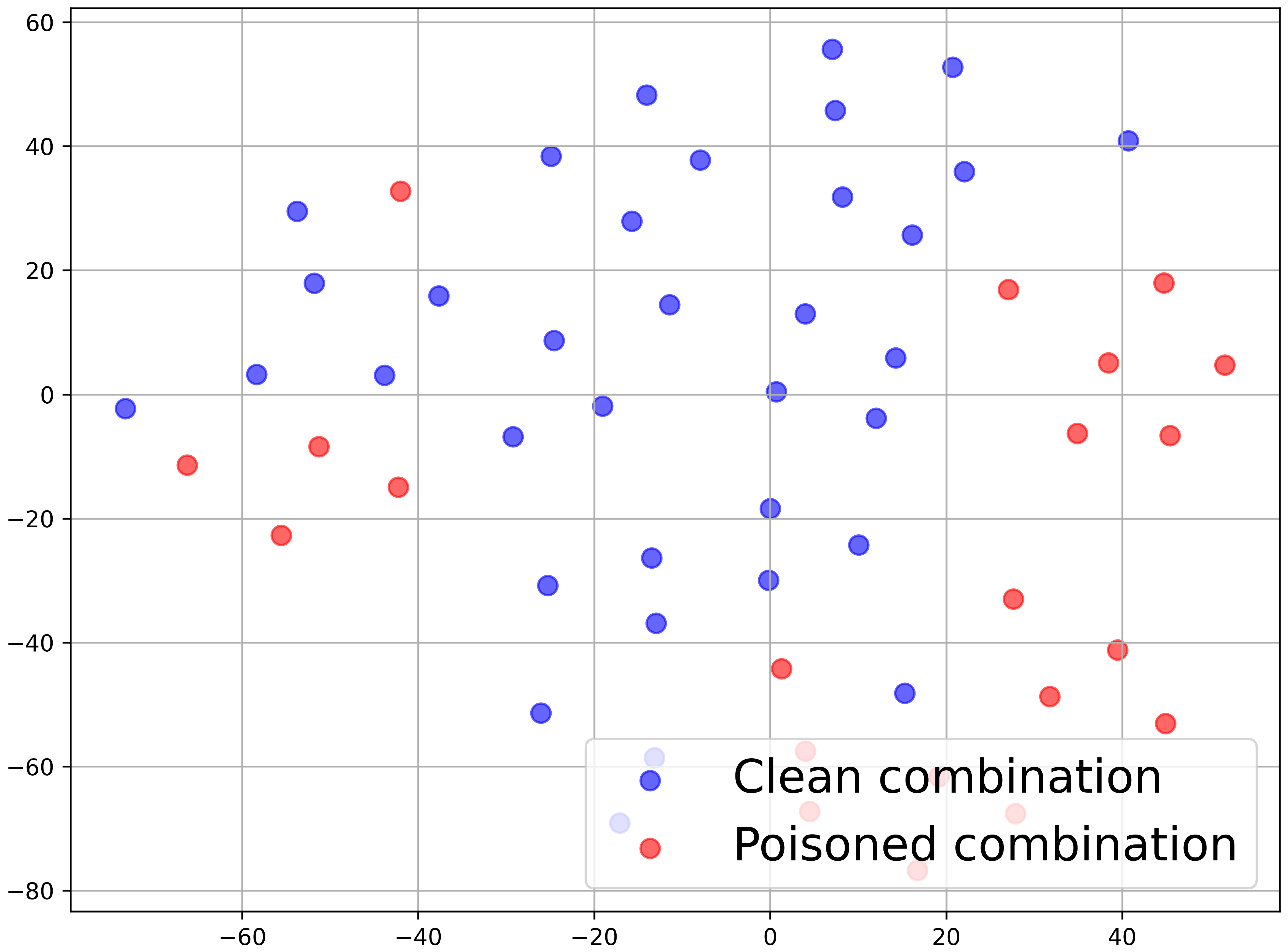}\label{fig:mistral}}
    \hfill
    \subfloat[Vicuna-7B]{\includegraphics[width=0.5\linewidth]{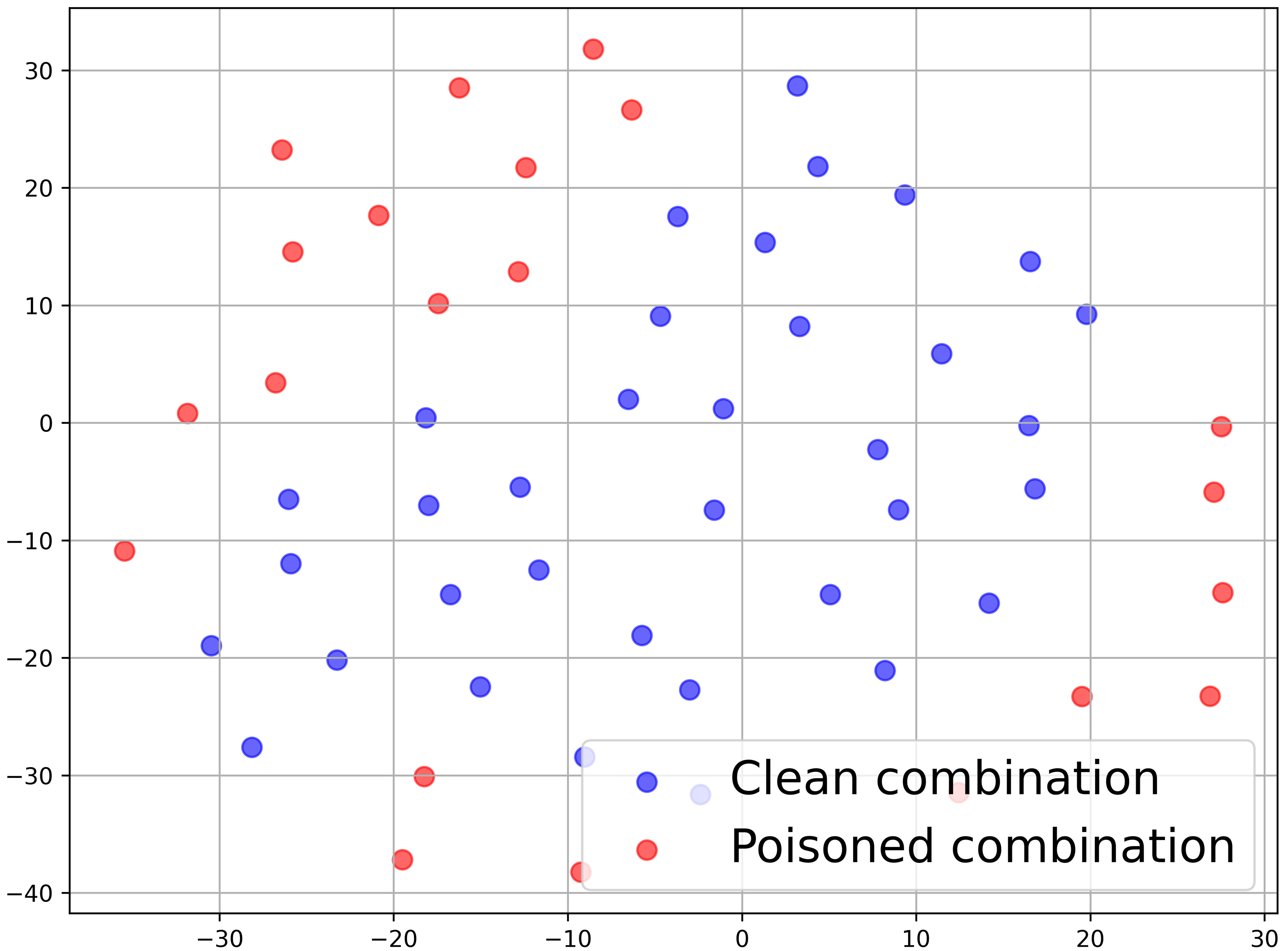}\label{fig:vicuna}}
    \caption{Embedding Distribution of Retrieved Text Combinations (Red: With Poisoned Texts; Blue: Without Poisoned Texts)}
    \label{fig:cluster}
\end{figure}

\section{Additional Experimental Results}
\label{sec:appendix}
\subsection{Distribution of Different Combinations}
\label{sec:distribution}

Figure~\ref{fig:cluster} illustrates the distributional differences between combinations containing poisoned texts and clean combinations. It is evident that the poisoned texts, intentionally crafted to induce attacker-desired responses, exhibit a significant shift in their embedding vectors compared to clean texts.

\renewcommand{\arraystretch}{1.2} % 默认是 1.0，调大数值可增大行距
\begin{table}[t]
    \centering
    \footnotesize
    \caption{Computational overhead of PRA-RAG under different Top-$K$ retrieval settings, with subset size fixed at 3 and a poisoning rate of 20\%.}
    \resizebox{0.99\linewidth}{!}{%
    \begin{tabular}{lcccc}
        \toprule
        \textbf{Models} & \textbf{Top-$K$} & \textbf{NQ} & \textbf{MS-MARCO} & \textbf{HotpotQA} \\
        \midrule
        \multirow{3}{*}{Mistral-7B} 
        & 8  & 5.98s & 6.20s & 6.17s \\
        & 12 & 18.47s & 24.88s & 22.14s \\
        & 16 & 68.05s & 65.48s & 72.16s \\
        \midrule
        \multirow{3}{*}{Vicuna-7B} 
        & 8  & 3.37s & 2.91s & 3.02s \\
        & 12 & 12.13s & 11.58s & 12.23s \\
        & 16 & 64.93s & 67.13s & 59.25s \\
        \bottomrule
    \end{tabular}}
    \label{table:cost_topk-1}
\end{table}

\renewcommand{\arraystretch}{1.2} % 默认是 1.0，调大数值可增大行距
\begin{table}[t]
    \centering
    \footnotesize
    \caption{Computational overhead of PRA-RAG with Top-$K = 12$ retrieval under a poisoning rate of 10\%, evaluated across different subset sizes.}
    \resizebox{0.99\linewidth}{!}{%
    \begin{tabular}{lcccc}
        \toprule
        \textbf{Model} & \textbf{Subset Size} & \textbf{NQ} & \textbf{MS-MARCO} & \textbf{HotpotQA} \\
        \midrule
        \multirow{3}{*}{Mistral-7B} 
        & 3 & 18.71s & 18.10s & 17.76s \\
        & 4 & 55.25s & 55.30s & 57.89s \\
        & 5 & 125.55s & 131.37s & 124.96s \\
        \midrule
        \multirow{3}{*}{Vicuna-7B} 
        & 3 & 11.88s & 12.16s & 11.84s \\
        & 4 & 47.92s & 46.12s & 52.40s \\
        & 5 & 116.09s & 121.47s & 117.85s \\
        \bottomrule
    \end{tabular}}
    \label{table:cost_subset-1}
\end{table}

\subsection{Analysis of Computational Overhead}
\label{sec:cost}
Table~\ref{table:cost_topk-1} compares the computational overhead under different numbers of retrieved texts, with the subset size $n=3$. The results show that as Top-$K$ increases, the number of combinations grows rapidly, leading to a near-linear increase in computation time. Specifically, when Top-$K$ is set to 8, 12, and 16, the average generation times are 5.98 s, 18.47 s, and 68.05 s, respectively. Table~\ref{table:cost_subset-1} compares the computational overhead under a fixed Top-$K$ retrieval size of 12 with varying subset sizes. The results show that the computation time no longer correlates linearly with the number of combinations, as changes in subset size also affect the overall time required for the final RAG response generation. Both the number of retrieved texts (Top-$K$) and the subset size ($n$) have a significant impact on the response time of our approach. Nevertheless, as shown in Table~\ref{table:latency}, our method still outperforms existing baselines in terms of efficiency under certain settings (e.g., Top-$K=8$, $n=3$), which represent typical real-world scenarios. Although increasing Top-$K$ and $n$ leads to higher computational overhead, it also enhances system security. As shown in Figure~\ref{fig:parameter}, the PAD value decreases as Top-$K$ and $n$ increase, indicating higher security. Overall, users can flexibly balance efficiency and security by selecting appropriate values of Top-$K$ and $n$ based on practical needs.

\begin{table}[t]
    \centering
    \footnotesize
    % \color{blue}
    % \captionsetup{labelfont={color=blue}, textfont={color=blue}}
    \caption{Performance of PRA-RAG using Monte Carlo sampling under varying Top-$K$ (subset size = 3, poisoning rate = 10\%).}
    \resizebox{0.99\linewidth}{!}{\huge
    \begin{tabular}{lcccccccccc}
    \toprule
    \multirow{2}{*}{\textbf{Models}} & \multirow{2}{*}{\textbf{Top-$K$}}
    & \multicolumn{3}{c}{\textbf{NQ}}
    & \multicolumn{3}{c}{\textbf{MS-MARCO}}
    & \multicolumn{3}{c}{\textbf{HotpotQA}} \\
    \cmidrule(lr){3-5} \cmidrule(lr){6-8} \cmidrule(lr){9-11}
    & & ACC & ASR & PAD & ACC & ASR & PAD & ACC & ASR & PAD \\
    \midrule
    \multirow{3}{*}{Mistral-7B}
    & 8
      & 0.70 & 0.00 & 1.28
      & 0.62 & 0.00 & 1.31
      & 0.34 & 0.12 & 1.32 \\ 
    & 12
      & 0.52 & 0.00 & 1.16
      & 0.66 & 0.00 & 1.21
      & 0.44 & 0.02 & 1.18 \\ 
    & 16
      & 0.52 & 0.00 & 1.12
      & 0.68 & 0.00 & 1.20
      & 0.40 & 0.02 & 1.15 \\ 
    \midrule
    \multirow{3}{*}{Vicuna-7B}
    & 8
      & 0.44 & 0.00 & 1.78
      & 0.66 & 0.02 & 1.84
      & 0.42 & 0.04 & 1.86 \\ 
    & 12
      & 0.51 & 0.00 & 1.65
      & 0.76 & 0.00 & 1.72
      & 0.52 & 0.01 & 1.74 \\ 
    & 16
      & 0.42 & 0.00 & 1.64
      & 0.70 & 0.00 & 1.70
      & 0.48 & 0.00 & 1.67 \\ 
    \bottomrule
    \end{tabular}}
    \label{table: topk}
\end{table}

\begin{table}[t]
    \centering
    \footnotesize
    % \color{blue}
    % \captionsetup{labelfont={color=blue}, textfont={color=blue}}
    \caption{Performance of PRA-RAG using Monte Carlo sampling with Top-$K = 12$ retrieval under a poisoning rate of 10\%, evaluated across different subset sizes.}
    \resizebox{0.99\linewidth}{!}{\huge
    \begin{tabular}{lcccccccccc}
    \toprule
    \multirow{2}{*}{\textbf{Models}} & \multirow{2}{*}{\textbf{Subset Size}}
    & \multicolumn{3}{c}{\textbf{NQ}}
    & \multicolumn{3}{c}{\textbf{MS-MARCO}}
    & \multicolumn{3}{c}{\textbf{HotpotQA}} \\
    \cmidrule(lr){3-5} \cmidrule(lr){6-8} \cmidrule(lr){9-11}
    & & ACC & ASR & PAD & ACC & ASR & PAD & ACC & ASR & PAD \\
    \midrule
    \multirow{3}{*}{Mistral-7B}
    & 3
      & 0.52 & 0.00 & 1.16
      & 0.66 & 0.00 & 1.21
      & 0.44 & 0.02 & 1.18 \\ 
    & 4
      & 0.56 & 0.00 & 1.02
      & 0.66 & 0.00 & 1.07
      & 0.44 & 0.02 & 1.07 \\ 
    & 5
      & 0.72 & 0.00 & 0.90
      & 0.82 & 0.00 & 0.97
      & 0.46 & 0.01 & 0.96 \\ 
    \midrule
    \multirow{3}{*}{Vicuna-7B}
    & 3
      & 0.51 & 0.00 & 1.65
      & 0.76 & 0.00 & 1.72
      & 0.52 & 0.01 & 1.74  \\ 
    & 4
      & 0.48 & 0.00 & 1.44
      & 0.66 & 0.02 & 1.51
      & 0.47 & 0.02 & 1.55 \\ 
    & 5
      & 0.47 & 0.00 & 1.30
      & 0.60 & 0.00 & 1.37
      & 0.45 & 0.02 & 1.36 \\ 
    \bottomrule
    \end{tabular}}
    \label{table: n}
\end{table}

\begin{table}[t]
    \centering
    \footnotesize
    \caption{Performance of PRA-RAG using Monte Carlo sampling with Top-$K = 20$ retrieval under a poisoning rate of 20\%, evaluated across different sampling numbers and subset sizes under Mistral-7B.}
    \resizebox{0.99\linewidth}{!}{\huge
    \begin{tabular}{lcccccccccc}
    \toprule
    \multirow{1}{*}{\textbf{Sampling}} & \multirow{2}{*}{\textbf{Subset Size}}
    & \multicolumn{3}{c}{\textbf{NQ}}
    & \multicolumn{3}{c}{\textbf{MS-MARCO}}
    & \multicolumn{3}{c}{\textbf{HotpotQA}} \\
    \cmidrule(lr){3-5} \cmidrule(lr){6-8} \cmidrule(lr){9-11}
    \textbf{Number}& & ACC & ASR & PAD & ACC & ASR & PAD & ACC & ASR & PAD \\
    \midrule
    \multirow{3}{*}{200}
    & 3
      & 0.57 & 0.01 & 1.08
      & 0.74 & 0.00 & 1.21
      & 0.48 & 0.02 & 1.15 \\ 
    & 4
      & 0.64 & 0.00 & 0.94
      & 0.70 & 0.00 & 1.07
      & 0.56 & 0.02 & 1.00 \\ 
    & 5
      & 0.69 & 0.00 & 0.83
      & 0.76 & 0.00 & 0.96
      & 0.53 & 0.02 & 0.90 \\ 
    \midrule
    \multirow{3}{*}{400}
    & 3
      & 0.56 & 0.00 & 1.07
      & 0.69 & 0.00 & 1.21
      & 0.51 & 0.01 & 1.14 \\ 
    & 4
      & 0.67 & 0.00 & 0.93
      & 0.70 & 0.01 & 1.06
      & 0.54 & 0.03 & 0.99 \\ 
    & 5
      & 0.68 & 0.01 & 0.83
      & 0.79 & 0.00 & 0.96
      & 0.58 & 0.01 & 0.90 \\ 
    \midrule
    \multirow{3}{*}{600}
    & 3
      & 0.57 & 0.01 & 1.07
      & 0.67 & 0.00 & 1.21
      & 0.48 & 0.01 & 1.14 \\ 
    & 4
      & 0.62 & 0.00 & 0.93
      & 0.74 & 0.00 & 1.05
      & 0.52 & 0.02 & 0.99 \\ 
    & 5
      & 0.70 & 0.00 & 0.83
      & 0.73 & 0.01 & 0.95
      & 0.54 & 0.02 & 0.90 \\ 
    \bottomrule
    \end{tabular}}
    \label{table: sampling_ablation_1}
\end{table}

\subsection{\textcolor{blue}{Evaluation of the Impact of Monte Carlo Sampling}}
\label{sec:Monte Carlo}

\textcolor{blue}{
As the number of retrieved documents (Top-$K$) and the subset size ($n$) increase, the number of candidate combinations grows rapidly, incurring significant computational overhead. To reduce costs and enhance the practicality of our method, we introduce Monte Carlo sampling when the combination size reaches a certain level. Based on the analysis in Section~\ref{sec:appendix_sampling_proof}, we set the sampling count to $m=200$, which also serves as the threshold for triggering the sampling process. Specifically, when the number of candidate combinations does not exceed $m$, we employ the original full enumeration strategy; when the number exceeds $m$, we utilize Monte Carlo sampling to approximate the combinatorial space. This strategy effectively controls computational overhead while minimizing the impact on performance.
}

\renewcommand{\arraystretch}{1.2} % 默认是 1.0，调大数值可增大行距
\begin{table}[t]
    \centering
    \footnotesize
    % \color{blue}
    % \captionsetup{labelfont={color=blue}, textfont={color=blue}}
    \caption{Computational overhead of PRA-RAG using Monte Carlo sampling under varying Top-$K$ (subset size $n=3$, poisoning rate = 20\%).}
    \resizebox{0.99\linewidth}{!}{%
    \begin{tabular}{lcccc}
        \toprule
        \textbf{Models} & \textbf{Top-$K$} & \textbf{NQ} & \textbf{MS-MARCO} & \textbf{HotpotQA} \\
        \midrule
        \multirow{3}{*}{Mistral-7B} 
        & 8  & 5.98s & 6.20s & 6.17s \\
        & 12 & 11.78s & 11.71s & 11.97s \\
        & 16 & 11.09s & 11.69s & 12.50s \\
        \midrule
        \multirow{3}{*}{Vicuna-7B} 
        & 8  & 3.37s & 2.91s & 3.02s \\
        & 12 & 6.67s & 7.55s & 6.42s \\
        & 16 & 9.23s & 7.84s & 6.46s \\
        \bottomrule
    \end{tabular}}
    \label{table:cost_topk}
\end{table}

\renewcommand{\arraystretch}{1.2} % 默认是 1.0，调大数值可增大行距
\begin{table}[t]
    \centering
    \footnotesize
    % \color{blue}
    % \captionsetup{labelfont={color=blue}, textfont={color=blue}}
    \caption{Computational overhead of PRA-RAG using Monte Carlo sampling with Top-$K = 12$ under a poisoning rate of 10\%.}
    \resizebox{0.99\linewidth}{!}{%
    \begin{tabular}{lcccc}
        \toprule
        \textbf{Model} & \textbf{Subset Size} & \textbf{NQ} & \textbf{MS-MARCO} & \textbf{HotpotQA} \\
        \midrule
        \multirow{3}{*}{Mistral-7B} 
        & 3 & 12.26s & 10.51s & 11.93s \\
        & 4 & 11.96s & 15.07s & 12.16s \\
        & 5 & 11.95s & 15.09s & 12.17s \\
        \midrule
        \multirow{3}{*}{Vicuna-7B} 
        & 3 & 6.87s & 9.03s & 6.79s \\
        & 4 & 7.03s & 7.70s & 6.70s \\
        & 5 & 7.17s & 8.09s & 7.28s \\
        \bottomrule
    \end{tabular}}
    \label{table:cost_subset}
\end{table}

\textcolor{blue}{
% Tables~\ref{table: topk} and~\ref{table: n} present the performance of our method using Monte Carlo sampling approximation across various settings of retrieved document counts (Top-$K$) and subset sizes ($n$), specifically when the number of candidate combinations exceeds the threshold ($m=200$). By comparing these results with those obtained via full enumeration in Tables~\ref{table: topk-1} and~\ref{table: n-1}, it is evident that the introduction of Monte Carlo sampling causes no significant degradation in performance. Furthermore, the overall trends remain consistent, indicating that the impact of poisoned texts is further attenuated as Top-$K$ and $n$ increase, while PAD exhibits a continuous downward trend.
Tables~\ref{table: topk} and~\ref{table: n} present the performance of our method using Monte Carlo sampling approximation across various settings of retrieved document counts (Top-$K$) and subset sizes ($n$), specifically when the number of candidate combinations exceeds the threshold ($m=200$). By comparing these results with those obtained via full enumeration in Figure~\ref{fig:parameter}, it is evident that the introduction of Monte Carlo sampling causes no significant degradation in performance. Furthermore, the overall trends remain consistent, indicating that the impact of poisoned texts is further attenuated as Top-$K$ and $n$ increase, while PAD exhibits a continuous downward trend. To further strengthen this finding, we additionally conduct experiments with $K=20$ and subset sizes $n \in {3,4,5}$ under sampling counts $m \in {200,400,600}$. The results, reported in Table~\ref{table: sampling_ablation_1}, further confirm that the approximation quality remains stable as $K$ and $n$ scale.
}

\textcolor{blue}{
Tables~\ref{table:cost_topk} and~\ref{table:cost_subset} present the computational overhead of the algorithm with Monte Carlo sampling under different Top-$K$ and $n$ settings. In Table~\ref{table:cost_topk}, with Top-$K=8$ and $n=3$, the number of candidate combinations is 56, which remains below the threshold ($m=200$), implying that sampling is not triggered and the computational cost remains consistent with the results in Table~\ref{table:cost_topk-1}. In other scenarios, compared to the full enumeration results in Tables~\ref{table:cost_topk-1} and~\ref{table:cost_subset-1}, the sampling strategy significantly reduces computational overhead. For instance, on the Vicuna-7B model using the NQ dataset, the cost drops from 64.93s to 9.23s when Top-$K=16$ and $n=3$, and from 116.09s to 7.17s when Top-$K=12$ and $n=5$.
}
\end{document}